\documentclass[10pt,letterpaper]{article}

\usepackage[utf8]{inputenc}
\usepackage[T1]{fontenc}
\usepackage{lmodern}
\newcommand{\ACFA}{\texorpdfstring{{\mdseries\upshape\textsc{ACFA}}}{ACFA}}

\usepackage{amsmath,amssymb,amsthm}
\usepackage{booktabs}
\usepackage{graphicx}
\usepackage{xcolor}
\usepackage[hidelinks]{hyperref}
\usepackage{enumitem}
\usepackage{multirow}
\usepackage{array}
\usepackage{tabularx}
\usepackage[margin=0.75in]{geometry}
\usepackage{microtype}
\usepackage{caption}
\newtheorem{theorem}{Theorem}
\newtheorem{lemma}[theorem]{Lemma}

\newtheorem{definition}[theorem]{Definition}
\newtheorem{remark}[theorem]{Remark}

\title{Byzantine Accountability Without Consensus:\\
Strong Eventual Consistency for Non-Associative, Stochastic, Robust Aggregation}

\author{Ryan Gillespie\\
Independent researcher}

\date{}

\begin{document}
\maketitle

\begin{abstract}
Byzantine-robust aggregation rules such as multi-Krum assume a central
coordinator, and decentralising them is obstructed by the rules themselves:
they are globally coupled, non-associative, and discontinuous, so an ulp-scale
perturbation can flip the selected subset, moving the output by a non-vanishing
amount. None of this prevents coordinator-free replication, because a robust rule
needs no agreed \emph{order} of contributions, only an agreed \emph{set} and an
agreed \emph{exclusion predicate}, both of which converge without consensus.
\ACFA{} (Accountable Consensus-Free Aggregation) replicates a content-addressed
OR-Set of signed contributions and a grow-only set of self-authenticating
equivocation proofs, offline-verifiable by anyone. Aggregation is a
deterministic pure function of the converged product state: fixed-point integer
arithmetic over a hash-canonical order, ties broken by content hash. We prove
that any pure function of a converged product of CRDTs (non-monotone,
non-associative, or stochastic) inherits Strong Eventual Consistency,
together with its converse; the contribution is the composition of a data
lattice with an evidence lattice applied to a robust selector, not the
elementary lifting step. A prototype (10 nodes, 3 Byzantine) passes 16/16
falsification checks: byte-identical roots under adversarial gossip,
deterministic re-convergence after late equivocation proofs, partition
recovery, and three byte-identity-breaking ablations. The guarantee is \emph{consistency},
not accuracy; robustness is imported, conditional on $\ge 2f+3$ admitted
contributions (at most $f$ Byzantine) and a stated quantisation-margin condition.
\end{abstract}

\section{Introduction}\label{sec:intro}

Robust aggregation and decentralised replication have developed on opposite
sides of an architectural assumption. Byzantine-robust learning rules
(Krum~\cite{blanchard2017machine}, coordinate-wise trimmed
mean~\cite{yin2018byzantine}, Bulyan~\cite{mhamdi2018hidden},
FLTrust~\cite{cao2021fltrust}) filter adversarial contributions at a central
server. Decentralised training removes the server but replicates only the
\emph{training signal}, not the \emph{filter}. Gossip-based
SGD~\cite{lian2017decentralized,koloskova2019decentralized} and periodic-sync
schemes~\cite{douillard2023diloco} yield replicas whose parameters agree
statistically, not identically, and Byzantine variants of these schemes leave
honest replicas free to exclude \emph{different} nodes and therefore to compute
\emph{different} models. Systems that do force agreement on the filtered result
do so by importing consensus, a total order from a blockchain or a BFT
protocol, as in Krum-on-a-blockchain designs~\cite{garciamarquez2025kfc},
and with it the coordination cost the decentralisation was meant to remove.

This paper shows the assumption is unnecessary for safety. A robust aggregation
rule does not need an agreed order of arrival; it needs an agreed \emph{set} of
contributions and an agreed \emph{exclusion predicate} over identities. Sets
converge without consensus, which is precisely the guarantee of a state-based
CRDT~\cite{shapiro2011conflict}. Exclusion-by-proof also converges without
consensus, because a proof of equivocation (two signatures by the same key on
different contents for the same round) is \emph{self-authenticating}, so any
party can verify it offline, its validity is objective, and the set of valid
proofs only grows~\cite{freitas2021selfhealing,almeida2024blocklace}. Everything else in the pipeline is a pure function, and a pure
function of converged state needs no coordination at all.

Why insist on byte-identity rather than a numerical tolerance? In a coordinated
deployment a server anchors the truth and statistical drift is re-anchored every
round. In a coordinator-free topology nothing re-anchors, so drift, however
small per round, can compound across rounds,
silently forking downstream state and foreclosing deterministic audit (recomputing exactly what a replica must have
computed). Exact agreement closes both, and for selector rules the choice is not
even available, because a selection flip moves the aggregate by a fixed quantum
regardless of how small the perturbation was
(Section~\ref{sec:discontinuity}), so no tolerance threshold is sound.

The construction, \ACFA{}, extends the two-layer architecture of our prior
work~\cite{gillespie2026crdtmerge}, which made non-associative model-merging
operators convergent by separating state management (an OR-Set CRDT) from
deterministic strategy execution. That paper's discussion (\S7.2, L4) sketched a
trust-gated extension and left open whether the pattern ``can deliver
consensus-free Byzantine isolation in practice''. This paper answers that question affirmatively for a specific, established class of rules, with three changes that the
adversarial setting forces. The replicated state becomes a \emph{product} of two
lattices (contributions and evidence); the reduction becomes a globally coupled,
discontinuous selector rather than a smooth merge (and the theorem admits
stochastic selectors, though the reference rule is deterministic); the
determinism obligation moves from an assumption on the execution environment
(Assumption~10 of~\cite{gillespie2026crdtmerge}) to a property of the arithmetic
itself, because for discontinuous selectors no tolerance-based relaxation of that
assumption exists (Section~\ref{sec:discontinuity}). The same discontinuity is
why this is not an incremental patch to smooth trust-weighted designs such as
the author's E4 protocol~\cite{gillespie2026e4}. There, quantisation tolerance
absorbs rounding drift precisely because the aggregator is smooth; a selector
has no such slack, and the arithmetic discipline plus the strict separation of
the evidence lattice from the data lattice become forced rather than stylistic
(Section~\ref{sec:discussion}).

\paragraph{Contributions.}
\begin{enumerate}[leftmargin=1.5em]
\item A \emph{product-lifting} theorem with a converse: any pure function of a
converged product of CvRDTs inherits Strong Eventual Consistency, even when the
function is non-monotone, non-associative, and stochastic; and an
order-and-entropy-consuming implementation computes a well-defined function of
the state \emph{if and only if} it is invariant to input permutation and consumes
only state-derived entropy (Section~\ref{sec:theory}).
\item The \ACFA{} construction: a contribution OR-Set composed with an
equivocation-proof G-Set, an admission rule (per-round duplicate exclusion plus
conviction), and a fixed-point full-dimension multi-Krum kernel resolved as a
deterministic pure function of the converged product state
(Section~\ref{sec:solution}).
\item Accountability without votes, via the established exclusion-by-proof
primitive~\cite{freitas2021selfhealing,almeida2024blocklace}: equivocation
proofs are self-authenticating, conviction is monotone and permanent, and the part specific to this
construction is that a proof arriving after an aggregate has been computed
deterministically re-converges every honest replica to the same post-eviction
output of the discontinuous selector
(Theorem~\ref{thm:eviction}).
\item A prototype and falsification battery: 16/16 checks pass, including
byte-identical roots under adversarial gossip, partition healing, late-proof
re-convergence, and three ablations that each break byte-identity
(Section~\ref{sec:experiments}).
\end{enumerate}

\paragraph{What this paper does not claim.}
The guarantee is \emph{consistency}, not accuracy. The replicated aggregate is
exactly what the fixed-point robust rule computes on the admitted set, no
better. The guarantee is \emph{safety}, not liveness. As in any eventually
consistent system, an adversary controlling message delivery can delay
convergence indefinitely; the theorems condition on delivery, never on timing.
The construction assumes a PKI, which restricts it to \emph{permissioned}
settings (cross-organisation consortia, regulated deployments); Sybil resistance,
key rotation, and re-admission are delegated to it. Conviction is of the
\emph{key}. An adversary who steals an honest node's key and equivocates with it
evicts that identity \emph{permanently}, with no revocation path here, so key
theft is strictly worse than under per-round filtering, and re-admission under a
rotated key is a PKI concern. Provable
eviction covers exactly the violations that are self-authenticating
(equivocation, which \emph{convicts} the identity and evicts it permanently;
malformed signatures are merely \emph{excluded} by verification, without
conviction); an adversary submitting plausible but
poisoned values within protocol is bounded statistically by the robust rule's
own guarantees~\cite{blanchard2017machine,yin2018byzantine}, not provably.
Finally, the consensus-free property is a statement about the \emph{per-round
resolve over a converged set}, not about a multi-round run: chaining round
$r{+}1$ onto round $r$'s aggregate requires each node to decide round $r$ is
\emph{closed}, and that cut is a quiescence-detection (liveness) question a CRDT
alone does not answer; nodes that cut at different points fork the round-DAG
until they reconcile, without threatening the per-round safety guarantee. No
total order is constructed, and none is needed.

\section{Background}\label{sec:background}

\subsection{State-Based CRDTs and the Two-Layer Architecture}

A convergent replicated data type (CvRDT)~\cite{shapiro2011conflict} is a state
space with a merge operation that is commutative, associative, and idempotent;
replicas that have received the same set of updates converge to identical
states regardless of message order, duplication, or delay; this is eventual
consistency~\cite{vogels2009eventually} strengthened to Strong Eventual
Consistency (SEC)~\cite{shapiro2011conflict}. The OR-Set
supports add and remove with add-wins semantics~\cite{shapiro2011comprehensive};
a G-Set is the grow-only special case. Merkle trees over content-addressed
elements provide integrity and a canonical
commitment~\cite{sanjuan2020merkle}.

Our prior work~\cite{gillespie2026crdtmerge} showed that neural-network merge
operators fail the CRDT axioms structurally (normalisation, projection, and
thresholding each break associativity), and made \emph{any} such operator
convergent with a two-layer architecture: Layer~1 replicates the contribution
set as an OR-Set; Layer~2 applies the operator as a deterministic pure function
over the canonically-ordered set, with randomness seeded from the Merkle root.
Convergence of the resolved value is conditional on strategy purity and on
bitwise-deterministic execution (Assumptions~9--10 there). This paper reuses
Layer~1 verbatim, replaces the assumption of environmental determinism with
integer arithmetic, and adds a second replicated lattice for adversarial
evidence.

\subsection{Byzantine-Robust Aggregation}

Given $n$ contributed vectors of which at most $f$ are adversarial,
Krum~\cite{blanchard2017machine} scores each vector by the sum of the squared
distances to its $n-f-2$ nearest neighbours and selects the minimiser; multi-Krum
instead selects the $m$ lowest-scoring vectors and averages them, where the
selection count $m \in \{1,\dots,n\}$ is a free parameter ($m{=}1$ recovers Krum,
$m{=}n$ plain averaging). This work fixes $m = n-f-2$ (Definition~\ref{def:resolve}).
Coordinate-wise trimmed mean~\cite{yin2018byzantine} discards the $\tau$
largest and smallest values per coordinate before averaging. Both come with
statistical guarantees against $f < n/2$-bounded adversaries under stated
conditions ($n \ge 2f+3$ for Krum; $\tau \ge f$ for the trimmed mean), and both are defined and analysed
at a coordinator that sees all $n$ contributions.

\section{The Problem: Why Robust Selectors Resist Decentralised
Replication}\label{sec:problem}

Three obstructions block naive decentralisation, each independently. Two are
properties of the rules themselves, global coupling with non-foldability
(\S\ref{sec:nonassoc}) and discontinuity (\S\ref{sec:discontinuity}), and one
is a property of the adversarial setting, where equivocation forces an agreed exclusion
predicate (\S\ref{sec:exclusion}).

\subsection{Global Coupling and Non-Foldability}\label{sec:nonassoc}

Krum's score for contribution $i$ depends on the pairwise distances among
\emph{all} contributions. Consequently the selection alone is not a sufficient running state: it cannot be
recomputed incrementally from a partial selection plus new arrivals.

\begin{lemma}[Non-foldability of multi-Krum]\label{lem:nonassoc}
There exist inputs on which multi-Krum selection over a set differs from
multi-Krum applied to the multi-Krum selection of a subset together with the
remaining elements. In particular, the natural fold (maintaining a partial
selection and re-selecting on new arrivals) does not compute the direct
selection.
\end{lemma}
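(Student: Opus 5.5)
The approach is an explicit counterexample. Since the statement is existential, it is enough to exhibit one small configuration of scalar contributions (vectors in $\mathbb{R}^1$, which embed into any dimension by padding with zeros) on which the two procedures select different subsets. I fix $f=1$ throughout, so the multi-Krum parameters are determined by $n$: each score sums the $n-f-2$ nearest squared distances, and $m=n-f-2$ vectors are selected, as fixed in Definition~\ref{def:resolve}'s convention and the Background. Ties are broken by any fixed rule, for instance by content hash. I will choose values so that no ties occur, which makes the rule irrelevant.

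\textbf{Construction.} Take a subset $S$ of size $5$. Here $n-f-2=2$, so each element is scored by its $2$ nearest squared distances and $2$ elements are kept. Let $S$ contain a tight pair $\{0,1\}$ together with three far-spread points, say $\{10,20,30\}$. Then $\mathrm{MK}(S)=\{0,1\}$, since those two have the smallest scores.

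The remaining elements $R$ are chosen so that the full set $S\cup R$ selects the cluster $\{10,20,30\}$ plus new points. A concrete choice is $R=\{15,25,12,22\}$. The full set then has $n=9$, so each score uses $6$ neighbours and $6$ elements are selected. The mass sits around $10$--$30$, so the $6$ lowest scores fall in that region and exclude $0$ and $1$.

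The fold instead operates on $\mathrm{MK}(S)\cup R=\{0,1,15,25,12,22\}$. This has $n=6$, so each score uses $3$ neighbours and $3$ elements are selected. Its output cannot contain $10,20,30$, because those were discarded at the subset stage. The two selections therefore differ; in fact they are not even equal as multisets of values.

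\textbf{Steps in order.} First, record the parameter dependence on $n$. Second, compute the scores for $S$ and verify $\mathrm{MK}(S)=\{0,1\}$. Third, compute the scores on the full set and identify the top $6$. Fourth, note that the fold's output omits $\{10,20,30\}$ while the direct output contains at least one of them, which already proves the inequality. The ``in particular'' clause then follows because the natural fold, processing $S$ first and then $R$, is exactly $\mathrm{MK}(\mathrm{MK}(S)\cup R)$.

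\textbf{Main obstacle.} The real work is the arithmetic in the third step. One has to check that the full-set selection actually contains one of $10,20,30$, and preferably excludes $0$ and $1$. The cheapest argument only needs the direct selection to contain some element outside $\mathrm{MK}(S)\cup R$. With $6$ of $9$ selected and only $3$ of the points lying in $\{10,20,30\}$, it suffices to show that at most $5$ of the other $6$ points are selected. If the numbers prove awkward, I would move $0,1$ further away, for example to $-100$ and $-99$. Their full-set scores then blow up, since they must reach $6$ neighbours and at most $1$ of those is close, which forces $\{0,1\}$ out of the direct selection. Their scores in $S$ stay minimal, since only $2$ neighbours are used there. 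So the direct output contains neither $0$ nor $1$, while the fold's output, having $3$ slots drawn from a pool of $6$, may differ for that reason too. Either difference establishes the lemma.
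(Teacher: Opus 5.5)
Your construction is correct, and the arithmetic checks out. In $S=\{0,1,10,20,30\}$ the two-neighbour scores are $101, 82, 181, 200, 500$, so $\mathrm{MK}(S)=\{0,1\}$. On the nine-point set the six-neighbour scores select $\{10,12,15,20,22,25\}$, with the cut between $553$ and $1138$. That selection contains $10$ and $20$, both absent from the fold's pool $\{0,1,12,15,22,25\}$.

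You argue by explicit instance, as the paper does, but your instance sits in a different regime, and the paper treats that regime as the point. When neither stage hits the select-all fallback, the fold returns $|S|+|R|-2f-4$ vectors against $|S|+|R|-f-2$ for the direct selection. In your instance that is $3$ versus $6$, so the lemma as literally stated already follows by counting. The arithmetic you call the main obstacle is needed only for your stronger content-level claim.

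The paper instead takes six seeded $200$-dimensional Gaussian vectors with $f=1$ and a single new arrival. This lands in the fallback regime $n\le 2f+4$, where the fold returns exactly as many vectors as the direct selection. The failure therefore cannot be dismissed as an artefact of $m=n-f-2$ shrinking under composition. Your witness meets the same objection differently: $10$ and $20$ are discarded irrevocably at the first stage, so no second-stage re-selection, with any count, can recover the direct output. Your version gives a hand-checkable, dimension-free instance. The paper's gives a same-size comparison, but it rests on a computed transcript.

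Two side remarks in your proposal are wrong, though the main argument does not use them.
\begin{itemize}
\item Moving the pair to $-100,-99$ would break your second step rather than rescue it. In $S$ each member of the pair must count its second-nearest neighbour, at squared distance at least $109^2$. Their scores become $11882$ and $12101$, against $200$ for the point $20$, so $\mathrm{MK}(S)$ would no longer be the pair. It is the pair's proximity to the cluster that lets it win with only two neighbours.
\item Your values are not tie-free. In the fold stage $15$ and $22$ both score $158$. This is harmless, because both are selected and your argument only uses membership in the fold's pool.
\end{itemize}
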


\begin{proof}
By explicit instance. For six 200-dimensional vectors drawn from a seeded
Gaussian generator (seed 42; the generating script is included with the
prototype), multi-Krum with $f=1$ over all six selects indices $\{1,3,5\}$,
while first selecting over the initial five and then applying multi-Krum to the
selected subset augmented with the sixth yields a different selection of the
\emph{same size}. This same-cardinality comparison is the demanding one, and it
exists only in this small-$n$ regime: a re-selection over the $n{-}f{-}2$ inputs
to the second stage (the $n{-}f{-}3$ survivors of the first stage plus the new
arrival) has $m' = (n{-}f{-}2){-}f{-}2 = n{-}2f{-}4 \le 0$ exactly when
$n \le 2f+4$, so the select-all convention returns exactly those $n{-}f{-}2$
vectors, the
direct selection's count, but a different \emph{set}. For larger $n$ the fold
cannot even match the cardinality; the fallback regime is thus where a fold could
most plausibly reconstruct the direct selection, and it still does not. The
transcript is reproducible from the artifact
(Appendix~\ref{app:falsifier}).
\end{proof}

This is the analogue, for selection rules, of the associativity failure that
motivated the two-layer architecture for merge
operators~\cite{gillespie2026crdtmerge}: an operator that cannot be folded
cannot be maintained as running state on replicas that see contributions in
different orders. The remedy there, deferring the operator until the set has
converged, applies here as well, but the adversarial setting adds the two
further obstructions below, which the original architecture does not address.

\subsection{Discontinuity Closes the Tolerance Route}\label{sec:discontinuity}

Replicas whose floating-point arithmetic differs by rounding cannot in general
produce bitwise-identical outputs; our prior work handled this with an
environmental determinism assumption (identical binaries, hardware, and
rounding modes). One alternative is \emph{tolerance}: for a smooth merge
operator, replicas could agree within an $\varepsilon$ rather than exactly. For
selection rules, no such tolerance exists.

\begin{lemma}[Selection discontinuity]\label{lem:discont}
The map from contributions to the multi-Krum selection is discontinuous: at a
score tie there exist configurations at which an arbitrarily small perturbation of
one contribution changes the selected subset, and the two aggregates then differ
by $\Delta/m$, where $\Delta>0$ is the gap between the swapped candidate vectors,
an amount that does not vanish as the perturbation vanishes.
\end{lemma}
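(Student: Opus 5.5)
The plan is to exhibit an explicit configuration at a score tie and then perturb one contribution continuously. I would work with multi-Krum as described in the Background: each score is the sum of squared distances to the $n-f-2$ nearest neighbours, the selection is the $m$ lowest scores, and the output is their mean. Fix $n \ge 2f+3$ and $m<n$. Choose contributions $x_1,\dots,x_n$ such that the $m$-th and $(m{+}1)$-th smallest scores are equal, attained at two candidates $x_a \neq x_b$, and all other scores are strictly separated from this tied value. A symmetric layout does this: place the remaining points so that reflecting across the perpendicular bisector of $x_a x_b$ permutes the configuration and exchanges $x_a$ with $x_b$. Scores are invariant under isometries that permute the point set, so $s(x_a)=s(x_b)$. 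The strict separation of the other scores can then be arranged by spreading those points generically, giving a few clearly better and clearly worse scores.

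Next, perturb $x_a$ to $x_a+\delta u$, where $u$ points toward the centroid of $x_a$'s nearest neighbours. Each score is a finite minimum, over neighbour subsets, of sums of squared distances, so it is continuous and piecewise polynomial in $\delta$. The other scores stay strictly on their side of the tie for $|\delta|$ small, so the order of every candidate except $a$ and $b$ is unchanged.

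The key step is to show that $s(x_a+\delta u)$ is strictly monotone near $\delta=0$, with $s(x_b)$ changing differently. If the neighbour sets are locally constant, $\frac{d}{d\delta}s(x_a)|_{0}$ picks up a term $-2\sum_j \langle x_j - x_a, u\rangle<0$ from $a$'s own sum. Meanwhile $s(x_b)$ changes only through the single term $\|x_b-x_a-\delta u\|^2$, and only if $x_a$ lies among $b$'s nearest neighbours. I would avoid that complication by placing $x_a$ and $x_b$ far enough apart that neither is among the other's $n-f-2$ nearest neighbours. This is the main obstacle: keeping the neighbour sets locally constant at the tie while preserving the symmetry. I would try a genericity argument and choose distances with no ties among the relevant neighbour distances. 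Then for arbitrarily small $\delta>0$ candidate $a$ is selected and $b$ is not, and for $\delta<0$ the reverse holds. Any deterministic tie-breaking picks one side at $\delta=0$, so an arbitrarily small perturbation on the other side flips the selection.

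Finally, the two selected sets $S\cup\{a\}$ and $S\cup\{b\}$ share the same $m-1$ members $S$. Their means therefore differ by $(x_a+\delta u - x_b)/m$. Its norm tends to $\|x_a-x_b\|/m = \Delta/m$ as $\delta\to 0$, where $\Delta>0$ is the gap between the swapped candidates. Because this limit is nonzero, the output cannot be continuous at the tie. As a fallback for the construction, I would give a concrete small instance, for example $n=5$, $f=1$, $m=2$ in $\mathbb{R}^2$, with coordinates checked by hand, in the same spirit as the explicit-instance proof of Lemma~\ref{lem:nonassoc}.
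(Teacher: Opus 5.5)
Your proposal is correct and follows the same route as the paper. The selection is an $\arg\min$ over continuous scores, so it flips when a perturbation breaks a tie at the selection boundary, and two means over $m$-sets that differ in one element differ by $\Delta/m$. The paper's proof only asserts that such a tie exists and can be crossed by a small perturbation. Your symmetric construction, or the explicit $n=5$, $f=1$, $m=2$ instance, supplies the witness configuration the lemma claims: $x_a$ and $x_b$ lie outside each other's neighbour sets, so moving $x_a$ toward its neighbours' centroid strictly lowers $s(x_a)$ while $s(x_b)$ stays fixed.
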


\begin{proof}
Scores are continuous in the inputs, but the selection is an $\arg\min$ over
scores: on the boundary where two scores are equal, the selected subset changes
discretely. When the two candidate vectors differ by $\Delta$, the two
aggregates (averages over subsets differing in one element) differ by
$\Delta/m$ regardless of how small the score perturbation was. For instance,
with $m=5$ and two candidate vectors differing by $\Delta=1$ in a single
coordinate, a boundary flip shifts that coordinate of the aggregate by
$1/5=0.2$, a jump no input-level tolerance can bound.
\end{proof}

The consequence is that an input-level tolerance yields no output-level
tolerance: replicas whose score computations differ by one unit in the last
place can select different subsets and output aggregates separated by
$\Delta/m$ (Lemma~\ref{lem:discont}), independent of the perturbation. Approximate agreement protocols~\cite{dolev1986approximate} recover
bounded disagreement interactively, at the cost of rounds, the coordination
this construction is designed to avoid. The remaining route is exact
arithmetic: fixed-point integer computation, in which every operation is exact
and identical on any architecture that implements integer semantics correctly.
In this construction, integer arithmetic is not an optimisation preference; it
is forced by Lemma~\ref{lem:discont}. Section~\ref{sec:ablations} exhibits the
failure empirically: a floating-point variant of the kernel, differing only in
accumulation order, produces divergent resolve transcripts.

\subsection{Exclusion Must Be Agreed}\label{sec:exclusion}

A robust rule tolerates $f$ adversarial \emph{values}. It does not by itself
handle an adversary that presents \emph{different} values to different replicas
(equivocation): replicas filtering locally may then exclude different
identities, and their outputs diverge silently even when both are individually
``robust''. Per-round statistical filtering also forgets: an identity excluded
by the geometry in round $r$ participates again in round $r+1$. Agreement on
\emph{who is excluded} is therefore as necessary as agreement on the
contribution set, and reaching it by voting would reintroduce consensus.

\section{The Solution: Accountable Consensus-Free
Aggregation}\label{sec:solution}

\subsection{Architecture Overview}

\ACFA{} replicates two components and computes everything else:

\begin{itemize}[leftmargin=1.5em]
\item \textbf{Contribution OR-Set $\mathcal{C}$}: entries
$\langle r, i, x, \sigma_i(r \,\|\, H(x)) \rangle$ (round, node identifier,
fixed-point tensor, signature), content-addressed by hash. Merge is OR-Set
union~\cite{shapiro2011comprehensive,gillespie2026crdtmerge}.
\item \textbf{Equivocation-proof G-Set $\mathcal{E}$}: entries
$\pi = \langle r, i, H(x), H(x'), \sigma_i(r \| H(x)), \sigma_i(r \| H(x'))
\rangle$ with $H(x) \ne H(x')$. Merge is set union. A proof is
\emph{self-authenticating}: its validity is decided by two signature
verifications against the PKI, by any party, offline. Invalid entries are
ignored by a validity filter inside the resolve function, so $\mathcal{E}$
need not be trusted as a set; only its valid subset matters, and validity is
objective.
\end{itemize}

The replicated state is the product $(\mathcal{C}, \mathcal{E})$ with
componentwise merge. The aggregation rule runs as a deterministic pure function
of this product (Definition~\ref{def:resolve}); by
Theorem~\ref{thm:lifting}, its output inherits SEC.

\subsection{Admission}\label{sec:admission}

\begin{definition}[Admitted set]\label{def:admit}
Let $\mathrm{Convicted}(\mathcal{E}) = \{ i : \exists\, \text{valid } \pi \in
\mathcal{E} \text{ naming } i \}$. For state $(\mathcal{C}, \mathcal{E})$ and
round $r$,
\[
\mathrm{Admit}(\mathcal{C}, \mathcal{E}, r) = \{\, e \in
\mathrm{Visible}(\mathcal{C}) : e.r = r,\; e.i \notin
\mathrm{Convicted}(\mathcal{E}),\; e \text{ is } e.i\text{'s unique visible
round-}r \text{ entry} \,\}.
\]
Here $\mathrm{Visible}(\mathcal{C})$ denotes the OR-Set's observable elements
(added and not removed); contributions are never removed, so this is simply the
element set, and the operator is kept for conformity with the OR-Set interface.
\end{definition}

The uniqueness clause excludes same-round duplicates deterministically as soon
as both are visible, so an equivocation never yields two admissible entries
even before a proof circulates; the conviction clause makes the exclusion
permanent across all rounds once a proof does. Entries with invalid signatures
are likewise excluded inside the function. Honest replicas that observe both
halves of an equivocation construct the proof themselves and add it to
$\mathcal{E}$; proofs are canonicalised (the two hash--signature pairs ordered
bytewise) so that all observers derive the identical proof object.

\subsection{The Deterministic Kernel}\label{sec:kernel}

\begin{definition}[Resolve]\label{def:resolve}
Let $A = \mathrm{Admit}(\mathcal{C}, \mathcal{E}, r)$, ordered by content hash (written $\mathrm{sort}_H$).
The resolved aggregate is
\[
R(\mathcal{C}, \mathcal{E}, r) \;=\; \mathrm{mean}\bigl(
\mathrm{MultiKrum}_{f}(A)\bigr),
\]
i.e.\ the coordinate-wise mean of the $m = |A|{-}f{-}2$ vectors selected by
\emph{full-dimension} multi-Krum, computed entirely in $Q16.16$ fixed-point
(integer) arithmetic, with score ties broken by content hash. Here, if $|A| \le f+2$ (so $m \le 0$) the selection degrades to
all of $A$, and $R$ of an empty admitted set is a distinguished bottom value,
conventions that keep $R$ total, as Theorem~\ref{thm:lifting} requires and as
the prototype implements.
\end{definition}

\begin{remark}[Design decisions forced by the adversarial review]\label{rem:review}
An earlier kernel composed multi-Krum with a coordinate-wise trimmed mean over a
64-coordinate subsample seeded from $\mathrm{PRF}(h_A \| r)$, where $h_A$ is the Merkle root of the admitted set $A$. Adversarial review
retired both: (i) a subsampled Krum is a \emph{different} estimator than
full-dimension Krum, so the imported robustness bound (Section~\ref{sec:related})
would not apply; and (ii) the subsample seed $h_A$ is adversary-computable, so a
last-mover Byzantine could grind its own vector against its own leaf~$\rightarrow$
seed~$\rightarrow$ coordinates to be selected on the sampled axes while poisoning
the rest. Full-dimension multi-Krum restores the clean bound and removes this attack
vector (no grinding vector was selected across 900 attempts, Section~\ref{sec:extbattery}, X9). The Merkle-seeded PRF remains available
for genuinely stochastic Layer-2 rules (Theorem~\ref{thm:lifting} admits them),
but the deterministic reference rule needs only the canonical hash tie-break.
The rule is selectable: full-dimension multi-Krum by default, or Bulyan (a
coordinate-wise median-trim after Krum selection, $|A| \ge 4f+3$) where
coordinate-concentrated robustness is required. The \emph{consistency}
guarantee (Theorem~\ref{thm:composed}, Part~A) is identical for either, since both are
pure functions of the converged state, so the framework is robust-rule-agnostic
and the consistency layer is orthogonal to the robustness rule.
\end{remark}

Two choices carry the determinism obligation, and a third fixes the agreement scope. \emph{Integer arithmetic}:
every operation is exact, so the output is identical on any architecture
implementing integer semantics under a fixed width contract (arbitrary-precision,
as in the prototype, or a documented fixed width with overflow semantics); the
environmental-determinism assumption of the prior architecture is thereby discharged
rather than assumed. In the reference implementation all $Q16.16$ accumulation
is performed in arbitrary-precision integers, precluding overflow; a fixed-width
deployment must define and document its overflow semantics as part of the
contract. \emph{Canonical order}:
multi-Krum is permutation-symmetric as a set function, so the
hash order is load-bearing exactly at score ties and at entropy derivation,
the two places where an implementation would otherwise consult arrival order
or a local generator (both ablated in Section~\ref{sec:battery}, A1/A2).
\emph{Admitted-set seeding}: entropy derives from a commitment to the admitted
set rather than to the full replica state, so replicas agree on the output as
soon as they agree on the admitted set, even while irrelevant state (invalid
proofs, not-yet-delivered duplicates) still differs. Both the proof
canonicalisation above and admitted-set seeding were forced by falsification
runs against earlier versions of the construction that seeded from the
full-state root (Section~\ref{sec:experiments}); each earlier version violated
byte-identity in a late-proof scenario.

\section{Formal Analysis}\label{sec:theory}

\begin{theorem}[Product CvRDT]\label{thm:product}
If $(S_1, \sqcup_1)$ and $(S_2, \sqcup_2)$ are CvRDTs, then
$(S_1 \times S_2, \sqcup_1 \times \sqcup_2)$ is a CvRDT under the product
order.
\end{theorem}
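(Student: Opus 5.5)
The plan is to verify the defining properties of a CvRDT componentwise, since every obligation on the product reduces to the same obligation on each factor. A CvRDT here is a state space whose merge is commutative, associative, and idempotent; equivalently, it is a join-semilattice under the induced order $s \le t \iff s \sqcup t = t$, with updates inflationary in that order. I would first fix the definitions on the product: merge is $(a_1,a_2) \sqcup (b_1,b_2) = (a_1 \sqcup_1 b_1,\, a_2 \sqcup_2 b_2)$, and the product order is $(a_1,a_2) \le (b_1,b_2) \iff a_1 \le_1 b_1 \wedge a_2 \le_2 b_2$.

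The algebraic laws follow by one line each. Commutativity, associativity, and idempotence of $\sqcup_1 \times \sqcup_2$ hold because two pairs are equal exactly when their components are equal, and each component is computed by $\sqcup_1$ or $\sqcup_2$, which satisfy the law by hypothesis. For associativity, for instance, $((a\sqcup b)\sqcup c)_k = (a_k \sqcup_k b_k)\sqcup_k c_k = a_k \sqcup_k (b_k \sqcup_k c_k) = (a \sqcup (b \sqcup c))_k$ for $k=1,2$.

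I would then show that the semilattice order induced by the product merge coincides with the product order, which is what ``under the product order'' asserts. We have $a \sqcup b = b$ iff $a_k \sqcup_k b_k = b_k$ for both $k$, iff $a_k \le_k b_k$ for both $k$. It follows that $a \sqcup b$ is the least upper bound in the product order, since the lub is computed componentwise.

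Finally, I would handle updates. Each local update of the product acts on one component: an add to $\mathcal{C}$ or an insertion into $\mathcal{E}$, lifted as $u(s_1,s_2) = (u_1(s_1), s_2)$ or symmetrically. Such an update is inflationary in the product order because the active component is inflationary by hypothesis and the identity is trivially inflationary. Convergence of replicas that have received the same updates then follows from the standard CvRDT result of \cite{shapiro2011conflict}, applied to the product semilattice.

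The only real obstacle is definitional. I need to state which notion of CvRDT is in force (a semilattice with inflationary updates, not just an ACI merge) and to specify how updates are lifted to the product so that inflation is preserved. Once the updates are fixed as acting on a single coordinate, every step is a componentwise check.
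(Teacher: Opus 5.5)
Your proposal is correct and matches the paper's proof, which also checks commutativity, associativity, and idempotence componentwise and cites the fact that a product of join-semilattices is a join-semilattice under the componentwise join. Your extra steps, identifying the induced order with the product order and lifting updates to single coordinates so they remain inflationary, spell out what the paper leaves to its citations.
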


\begin{proof}
Commutativity, associativity, and idempotency hold componentwise; the product
of join-semilattices is a join-semilattice with the componentwise
join~\cite{shapiro2011comprehensive,baquero2017composition}.
\end{proof}

\begin{theorem}[Product lifting]\label{thm:lifting}
Let $S = S_1 \times \cdots \times S_k$ be a product CvRDT and
$F : S \to Y$ any pure total function. If two replicas have received the same
set of updates (in any order, with any duplication) then $F$ evaluates
identically at both. $F$ need not be monotone, associative, or continuous;
stochastic $F$ is admitted whenever its entropy is itself a function of the
state (that nothing else is admissible is the converse, Theorem~\ref{thm:iff}).
\end{theorem}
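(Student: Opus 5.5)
The plan is to reduce the statement to state convergence and then apply the definition of a function. First, invoke Theorem~\ref{thm:product} (by induction on $k$, using associativity of the binary product construction) to conclude that $S = S_1 \times \cdots \times S_k$ with componentwise join $\sqcup$ is a CvRDT, i.e.\ a join-semilattice whose merge is commutative, associative, and idempotent. Then formalise ``received the same set of updates'': the state of a replica is $s_0 \sqcup u_1 \sqcup \cdots \sqcup u_\ell$, where $u_1,\dots,u_\ell$ is the sequence of delivered updates (state deltas or full states from gossip, each being an element of $S$ produced by some replica's local operation). Two replicas that have received the same \emph{set} $U$ of updates, in different orders and with arbitrary duplication, hold joins over two sequences whose underlying set is $U$.

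The core step is showing that these two joins coincide. Commutativity and associativity let each join be rearranged so that equal elements are adjacent, and idempotency collapses duplicates. Both joins therefore equal $s_0 \sqcup \bigsqcup U$, the least upper bound of the finite set $U \cup \{s_0\}$, which is unique in a semilattice. The two replica states are therefore the same element $s \in S$. This is the standard SEC argument of~\cite{shapiro2011conflict}, and I would cite it rather than redo it, stating explicitly that the hypothesis is delivery-based and not timing-based.

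Lifting is then immediate. $F$ is a pure total function $S \to Y$: it is defined on every state, so no replica hits an undefined case, and its output depends only on its argument, so $F(s) = F(s)$ at both replicas. No property of $F$ beyond being a function is used, so monotonicity, associativity, and continuity are irrelevant. Lemmas~\ref{lem:nonassoc} and~\ref{lem:discont} therefore pose no obstacle. Discontinuity matters only when inputs differ, and here they are identical. For the stochastic clause, model a stochastic rule as $F(s) = G(s, \rho(s))$ with $\rho: S \to \{0,1\}^*$ a deterministic entropy derivation, e.g.\ $\mathrm{PRF}$ of a Merkle root of the state or of the admitted set. Then $F$ is again a pure function of $s$ and the previous paragraph applies. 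Any dependence on the order in which the implementation iterates over set elements is handled the same way: a pure $F$ on $S$ is by definition order-independent, so an implementation meets the hypothesis exactly when it is permutation-invariant. The converse is deferred to Theorem~\ref{thm:iff}.

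The main obstacle is not mathematical depth but making ``pure function of the state'' precise enough to be non-vacuous for the prototype. Sets must be represented so that equal abstract states are equal inputs, which is why the canonical $\mathrm{sort}_H$ order and bytewise proof canonicalisation matter. Determinism of execution must also be assumed, discharged here by integer arithmetic. I would state this as the hypothesis that the implementation realises a mathematical function on abstract lattice elements, and note that this is precisely what Definition~\ref{def:resolve} is designed to ensure.
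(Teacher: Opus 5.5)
Your proposal is correct and follows the paper's proof: equality of the two replica states from Theorem~\ref{thm:product} and the CvRDT convergence theorem of~\cite{shapiro2011conflict}, then equality of a pure function at equal arguments, with PRF-of-state randomness treated as part of a pure function of the state. Your explicit rearrange-and-collapse argument for the joins, the induction on $k$, and the remarks about canonical representations unpack what the paper simply cites; they do not change the approach.
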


\begin{proof}
By Theorem~\ref{thm:product} and the CvRDT convergence
theorem~\cite{shapiro2011conflict}, the two replicas' states are equal.
A pure function evaluated at equal arguments yields equal values. A
``stochastic'' $F$ whose randomness is drawn from a PRF keyed by the state is a
pure function of the state.
\end{proof}

Theorem~\ref{thm:lifting} is a direct generalisation of Theorem~13 of the
prior work~\cite{gillespie2026crdtmerge}, which is the case $k=1$ with a deterministic
strategy. The generalisation is elementary, and we present it as such; its
value is the converse below, which delimits what an implementation may do, and
the product structure, which lets adversarial evidence live in its own lattice,
composed with the data lattice by the theorem. (The alternative, folding trust
into a single shared lattice with the data, is realised by the author's E4
protocol~\cite{gillespie2026e4}; the two designs are compared in
Section~\ref{sec:discussion}.)

\begin{theorem}[Characterisation of realisable implementations]\label{thm:iff}
Let an implementation be a procedure $P$ that consumes (a) some linearisation of
the state and (b) an entropy source. $P$ is \emph{replica-consistent} (yields
equal outputs on equal converged states) \textbf{if and only if} $P$ is
extensionally equal (almost everywhere in the entropy source; the
measurability details are in the proof) to some
$h(O(\text{state}), \mathrm{seed}(\text{state}))$,
where $O$ is any state-canonical total order and $\mathrm{seed}$ is any
state-derived function; equivalently, iff $P$ is invariant under permutation
of its input linearisation and consumes only state-derived entropy.
\end{theorem}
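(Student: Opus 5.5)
We first fix the formal setting. A converged state $s$ admits a set $L(s)$ of linearisations, namely all orderings of its element set. The entropy source is a probability space $(\Omega,\mu)$, and $P:\bigcup_s L(s)\times\Omega\to Y$. Replica-consistency is read as follows: for every state $s$, all $\ell,\ell'\in L(s)$, and $\mu$-almost every $\omega$, we have $P(\ell,\omega)=P(\ell',\omega)$, and the output depends on $\omega$ only through state-derived quantities. Concretely, two replicas holding equal states but drawing independent local entropy must agree almost surely. We make this precise by requiring that $P(\ell,\omega)=P(\ell',\omega')$ for $\mu\otimes\mu$-almost every pair $(\omega,\omega')$. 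With these definitions in place, we prove the two directions and the stated equivalence in turn.

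\emph{Sufficiency.} Suppose $P=h(O(s),\mathrm{seed}(s))$ almost everywhere. Then $P$ depends only on $s$, so equal converged states give equal outputs almost surely. This is exactly the argument of Theorem~\ref{thm:lifting}, since $s\mapsto h(O(s),\mathrm{seed}(s))$ is a pure total function of the state. Permutation invariance holds because $O(s)$ depends only on the set and not on the linearisation fed to $P$. The output consumes only state-derived entropy because $\mathrm{seed}(s)$ is the sole random-looking input.

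\emph{Necessity.} This direction has three steps.
\begin{enumerate}
\item \emph{Permutation invariance.} Take $\ell,\ell'\in L(s)$. Two replicas with equal state $s$ could present these two linearisations, so consistency forces $P(\ell,\omega)=P(\ell',\omega)$ almost surely.
\item \emph{Entropy independence.} Apply consistency to two replicas with independent draws $\omega,\omega'$. This gives $P(\ell,\omega)=P(\ell,\omega')$ for $\mu\otimes\mu$-almost every pair. By Fubini, there is a set $\Omega_0$ of full measure and a value $g(s)$ such that $P(\ell,\omega)=g(s)$ for all $\omega\in\Omega_0$. More carefully: the section $\{\omega':P(\ell,\omega)=P(\ell,\omega')\}$ has full measure for almost every $\omega$, so $P(\ell,\cdot)$ is almost surely constant.
\item \emph{Representation.} Choose any state-canonical order $O$, for example the content-hash order, and any state-derived seed. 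Define $h(O(s),\cdot):=g(s)$. This is well defined because $O(s)$ determines $s$, being an ordering of its element set. Hence $P=h(O(s),\mathrm{seed}(s))$ almost everywhere.
\end{enumerate}

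The ``equivalently'' clause follows by the same decomposition. Permutation invariance lets $P$ factor through the set, and hence through $O(s)$. Entropy that is almost surely irrelevant beyond state-derived values lets it factor through $\mathrm{seed}(s)$. Conversely, any such factorisation is both permutation-invariant and consumes only state-derived entropy.

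The main obstacle is the entropy step. A single replica's output may legitimately vary with $\omega$ unless the consistency notion compares independent draws. The statement therefore has content only under the two-independent-draws reading of replica-consistency, and we must state it explicitly. Measurability also needs care. We need $P(\ell,\cdot)$ measurable into a standard Borel $Y$ (countable, for hashes), so that the diagonal is measurable and Fubini applies. We handle the null sets by taking a countable intersection over the linearisations of a fixed state, which is finite, so the intersection is harmless.
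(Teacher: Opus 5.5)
Your proof is correct and follows the paper's argument. Sufficiency is immediate, permutation invariance comes from agreement across linearisations, and the entropy clause comes from two replicas with equal state drawing i.i.d.\ entropy and having to agree almost surely. The differences are minor: you extract almost-sure constancy directly from Fubini sections of the agreement set $\{(\omega,\omega') : P(\ell,\omega)=P(\ell,\omega')\}$, whereas the paper argues contrapositively via a Borel set $C$ with $\Pr[P(\cdot,Z)\in C]=p\in(0,1)$ and the disagreement bound $p(1-p)>0$, both resting on $Y$ being standard Borel; you also spell out the construction of $h$ and the two-independent-draws reading of replica-consistency, which the paper leaves implicit.
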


\begin{proof}
($\Leftarrow$) Immediate: both arguments of $h$ are functions of the state.
($\Rightarrow$) If $P$ is replica-consistent, its output depends on the state
only through permutation-invariant features, hence equals a function of the
multiset, and therefore of $O(\text{state})$ for any fixed state-canonical
$O$. For the entropy clause, model the external source as a random variable $Z$ and
say $P$ \emph{depends on} $Z$ if $z \mapsto P(\cdot,z)$ is not almost-surely
constant under the law of $Z$. The output space $Y$ is standard Borel (outputs are
finite bit-strings or fixed-point tensors), so non-almost-sure-constancy is
equivalent to the existence of a Borel output set $C$ with
$p := \Pr[P(\cdot,Z)\in C] \in (0,1)$; put $A_C=\{z: P(\cdot,z)\in C\}$ and
$B_C=\{z: P(\cdot,z)\notin C\}$, both of positive probability. (This routes through a
measurable set rather than a point mass, so it holds even when the output law is
atomless.) Two replicas with equal state and equal linearisation output
$P(\cdot,Z_1)$ and $P(\cdot,Z_2)$ with $Z_1,Z_2$ i.i.d.\ copies of $Z$ under the
product law; if $Z_1\in A_C$ and $Z_2\in B_C$ then $P(\cdot,Z_1)\in C \not\ni P(\cdot,Z_2)$,
so independence gives
$\Pr[P(\cdot,Z_1)\neq P(\cdot,Z_2)] \ge \Pr[Z_1\in A_C]\,\Pr[Z_2\in B_C] = p(1-p) > 0$,
contradicting replica-consistency. Hence $P$ may consume only state-derived entropy.
\end{proof}

\begin{remark}[Why extensional, not syntactic]\label{rem:iff}
The stronger-looking claim ``$P$ \emph{factors as} $h(\mathrm{sort}_H,
\mathrm{seed})$'' is false: a commutative fold (sum, count, maximum) is
replica-consistent yet never applies $\mathrm{sort}_H$ and consumes no seed.
Being a well-defined function of the state does not require syntactically
consulting a canonical order. The reference kernel does use $\mathrm{sort}_H$,
but any state-canonical order serves; $h$ absorbs the deterministic re-indexing.
\end{remark}

\begin{theorem}[Accountable eviction]\label{thm:eviction}
Assume signed messages under a PKI and eventual pairwise delivery among honest
replicas. Then: (i) validity of a proof is objective, so
$\mathrm{Convicted}(\mathcal{E})$ is equal at any two replicas with equal
$\mathcal{E}$ and is monotone non-decreasing along any execution; (ii) once any
honest replica holds a valid proof naming $i$, eventually all do, and every
subsequent resolve at every honest replica evicts $i$, permanently;
(iii) a proof arriving after a replica has already resolved changes its
admitted set deterministically, and by Theorem~\ref{thm:lifting} all honest
replicas re-resolve to the same post-eviction output. No vote, quorum, or round
is involved: the proof itself is the verdict, the accountable-safety /
slashing pattern of two conflicting signed messages by one
key~\cite{buterin2017casper}, here carried, as in the
blocklace~\cite{almeida2024blocklace}, over a CRDT rather than a consensus
vote, and gating a discontinuous selector's admission rather than protecting
the state.
\end{theorem}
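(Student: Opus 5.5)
The plan is to prove the three parts in order. Each reduces to facts already established: objectivity of signature verification under the PKI, the G-Set semantics of $\mathcal{E}$, CvRDT convergence via Theorem~\ref{thm:product}, and Theorem~\ref{thm:lifting} applied to the resolve function of Definition~\ref{def:resolve}.

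For (i), I first argue that validity of $\pi$ is a predicate $V(\pi)$ computed only from the bytes of $\pi$ and the fixed PKI. It checks that $H(x)\neq H(x')$ and that both signatures verify under $i$'s public key, so it consults no replica-local state. It follows that $\mathrm{Convicted}(\mathcal{E}) = \{i : \exists \pi\in\mathcal{E},\ V(\pi),\ \pi \text{ names } i\}$ is a function of $\mathcal{E}$ alone, and equal $\mathcal{E}$ give equal convicted sets. Monotonicity then follows because $\mathcal{E}$ is a G-Set. Every local step either adds an element or merges by union, so $\mathcal{E}$ only grows along an execution. The map $\mathcal{E}\mapsto\mathrm{Convicted}(\mathcal{E})$ is monotone under inclusion, since it is an existential over elements. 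Soundness that a valid proof actually implicates $i$ rests on unforgeability: under EUF-CMA, only the holder of $i$'s key can produce two valid signatures on distinct round-$r$ hashes. I will state this explicitly as the assumption behind ``the proof is the verdict''.

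For (ii), suppose an honest replica holds a valid $\pi$ naming $i$. Under eventual pairwise delivery among honest replicas, its state (in particular $\mathcal{E}$) is eventually merged into every honest replica. Because merge is union, $\pi$ is then present everywhere. By (i), $i\in\mathrm{Convicted}$ holds at each such replica from that point on, and by monotonicity it stays there. Definition~\ref{def:admit} excludes every entry $e$ with $e.i\in\mathrm{Convicted}(\mathcal{E})$ for every round $r$. Every resolve performed afterwards therefore has $i$ absent from $A$, which gives ``permanently''. A subtle point here is that delivery is assumed only among honest replicas. If the proof originated from a Byzantine gossiper, the claim starts only once some honest replica holds it, which is exactly how the hypothesis is phrased.

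For (iii), the main obstacle is making precise that ``re-resolve'' is well defined even though earlier outputs differed in time. My approach is to treat resolve as the pure total function $R(\cdot,\cdot,r)$ of Definition~\ref{def:resolve} on the product state, with ties broken by content hash and Q16.16 integer arithmetic, so that no arrival order or local entropy enters, as Theorem~\ref{thm:iff} requires. A late proof changes $\mathcal{E}$, hence possibly $\mathrm{Convicted}$, hence $A$ deterministically. Once all honest replicas have received the same set of updates, which is guaranteed eventually by (ii) and by delivery of $\mathcal{C}$, their product states are equal by Theorem~\ref{thm:product}, and Theorem~\ref{thm:lifting} yields equal $R$ values. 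I will also note that equality of $A$ already suffices, so replicas may agree even before full-state convergence. Earlier pre-proof outputs are not claimed to agree with the post-eviction one; they are superseded. Finally, I will observe that no step invoked a quorum or ordering, which justifies the closing sentence of the statement.
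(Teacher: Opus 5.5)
Your proposal is correct and follows the paper's proof essentially step for step. The three parts match: (i) validity is two PKI signature checks and set union never removes a proof; (ii) eventual delivery propagates the proof, and conviction depends only on the monotonically growing valid subset of $\mathcal{E}$; (iii) $\mathrm{Admit}$ and $R$ are pure functions of the product state, so Theorem~\ref{thm:lifting} applies to the post-delivery states. Your extras, namely the explicit unforgeability assumption and the observation that equal admitted sets already suffice, are elaborations the paper leaves implicit or makes elsewhere (admitted-set seeding in the kernel section), not a different route.
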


\begin{proof}
(i) Validity is two signature verifications against the PKI; a set union never
removes a valid proof. (ii) Eventual delivery propagates the proof; conviction
is a function of the valid subset of $\mathcal{E}$, which only grows.
(iii) $\mathrm{Admit}$ and $R$ are pure functions of the product state;
Theorem~\ref{thm:lifting} applies to the post-delivery states.
\end{proof}

\begin{remark}[Proof formation under partition]\label{rem:partition}
Clause (ii) conditions on eventual delivery of an \emph{existing} proof; forming
one additionally requires that some honest replica observe \emph{both}
conflicting messages. Under a partition that perfectly separates the two halves
of an equivocation from every honest node, no proof forms until the partition
heals, so the two sides admit different entries and hold divergent roots for the
duration. This interim divergence is the same class exercised separately by X3.b
(late-proof lag) and X4 (partition healing); the combined
equivocation-split-by-partition case is argued, not separately tested. Healing
makes
both entries visible everywhere, at which point the per-round uniqueness clause
excludes the pair and a proof forms. Undetectability under perfect isolation is
inherent to any asynchronous system without a dissemination commitment; it
delays, but never subverts, conviction.
\end{remark}

\begin{lemma}[Quantisation margin, a checkable no-flip condition]\label{lem:margin}
Let contributions be quantised to a grid of step $\delta$ (for $Q16.16$,
$\delta = 2^{-16}$), so each coordinate of the $d$-dimensional tensors carries error at most $\delta/2$. For
admitted vectors the perturbation of a squared distance is bounded by
$|\Delta\|x_i-x_j\|^2| \le 2\delta\|x_i-x_j\|_1 + d\delta^2 =: \Delta^\star$
(taking $\Delta^\star$ with the largest pairwise $L_1$ distance; because this
$L_1^{\max}$ is adversary-influenceable, an inflated contribution can enlarge
$\beta$ and \emph{deny} certification (pushing a configuration into the
empirical tier of Remark~\ref{rem:margin}) but cannot produce a false
certificate, since the bound is an upper bound on the true perturbation). Since a Krum
score is a sum of the $|A|{-}f{-}2$ smallest such distances (this $|A|{-}f{-}2$ is
Krum's nearest-neighbour count, fixed by the score definition, \emph{not} the
selection count $m$ of Definition~\ref{def:resolve}), each of its $|A|{-}f{-}2$
summands moves by at most $\Delta^\star$ under the per-entry perturbation, so
each score moves by at most
$\beta := (|A|{-}f{-}2)\Delta^\star$. Hence if the boundary margin
$g = s_{(m+1)} - s_{(m)}$ (the gap between the $m$-th and $(m{+}1)$-th smallest
real scores) satisfies
\[
g \;>\; 2\beta \;=\; 2(|A|{-}f{-}2)\bigl(2\delta\,L_1^{\max} + d\delta^2\bigr),
\]
the quantised multi-Krum selection \textbf{equals} the real-valued selection, and
the imported robustness envelope holds exactly, up to the $\le \delta/2$
per-coordinate value quantisation of the selected vectors plus one unit
($< \delta$) of floor rounding in the fixed-point mean. A replica evaluates $\hat g$
and $\hat L_1^{\max}$ on the quantised values it holds; since quantisation can
shrink an $L_1$ distance by at most $d\delta$, the observable
$\hat\Delta^\star := 2\delta\hat L_1^{\max} + 3d\delta^2$ upper-bounds the real
$\Delta^\star$, so $\hat\beta := (|A|{-}f{-}2)\hat\Delta^\star \ge \beta$; each score
then moves by at most $\hat\beta$, so the observable gap obeys
$g \ge \hat g - 2\hat\beta$, and checking $\hat g > 4\hat\beta$ on quantised data
soundly certifies $g > 2\beta$. The observable form is conservative in two ways:
the threshold doubles ($4\hat\beta$ versus $2\beta$, from bounding
$|g - \hat g| \le 2\hat\beta$), which dominates, and $\hat\Delta^\star$ inflates
$\Delta^\star$ by $\lesssim 4d\delta^2 \approx 2\times10^{-7}$ per squared distance.
Even the fully-inflated score-level slack
$2(|A|{-}f{-}2)(\hat\Delta^\star - \Delta^\star) \approx 1.9\times10^{-6}$ stays below
the minimum observed gap ($\approx 10^{-5}$), and the reported coverage uses the
ideal real-score condition $g > 2\beta$, so nothing empirical changes. The
coverage of Section~\ref{sec:extbattery} reports the real-score condition
$g > 2\beta$ (measured against a finer-precision reference), which this
observable form conservatively implies.
\end{lemma}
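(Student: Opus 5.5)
The argument has three layers: a per-distance perturbation bound, a transfer to Krum scores through an order-statistic stability argument, and a rank-preservation argument at the selection boundary, followed by the observable (quantised-side) version. For the first step I write each admitted vector as $\hat x_i = x_i + e_i$ with $\|e_i\|_\infty \le \delta/2$, so that $u := e_i - e_j$ satisfies $\|u\|_\infty \le \delta$. Expanding $\|(x_i-x_j)+u\|^2 - \|x_i-x_j\|^2 = 2\langle x_i-x_j, u\rangle + \|u\|^2$ and applying H\"older ($|\langle v,u\rangle| \le \|v\|_1\|u\|_\infty$) together with $\|u\|^2 \le d\delta^2$ gives $|\Delta\|x_i-x_j\|^2| \le 2\delta\|x_i-x_j\|_1 + d\delta^2$. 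This is monotone in the $L_1$ distance, so replacing it by $L_1^{\max}$ gives the uniform bound $\Delta^\star$. The same bound explains why inflating $L_1^{\max}$ can only enlarge $\beta$: it is an upper bound, so it can deny certification but never manufacture a false one.

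For the second step, each Krum score is the sum of the $k := |A|-f-2$ smallest entries of the list $(D_{ij})_{j\ne i}$. I will use the standard fact that order statistics are $1$-Lipschitz in $\ell_\infty$: if every entry moves by at most $\Delta^\star$, then the $t$-th smallest entry moves by at most $\Delta^\star$. This holds because the sorting map is $1$-Lipschitz coordinatewise in sup norm, via a counting argument on $\#\{j : D_{ij} \le \tau\}$. Summing the $k$ smallest order statistics then gives $|\hat s_i - s_i| \le k\Delta^\star = \beta$. I expect this to be the main subtle point, because the set of nearest neighbours may itself change under perturbation. The claim has to be phrased in terms of order statistics, not ``the same summands each move''; I would state the Lipschitz property and prove it by the counting argument.

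For the third step, rank preservation: suppose the real scores satisfy $s_{(m)} + g = s_{(m+1)}$ with $g > 2\beta$. Then every index in the real bottom-$m$ set has $\hat s \le s_{(m)} + \beta$, and every other index has $\hat s \ge s_{(m+1)} - \beta > s_{(m)} + \beta$. The quantised bottom-$m$ set is therefore the same set, and it is unambiguous: the strict inequality leaves no tie across the boundary, so the hash tie-break acts only within the set and does not affect membership. The mean is then taken over the same indices on quantised values. Each coordinate carries error at most $\delta/2$, and averaging does not increase that error. The fixed-point division floor contributes one further unit below $\delta$. The imported robustness envelope applies because the selected set coincides with the real selection.

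Finally, for the observable form, each $\hat L_1$ distance is at least the true distance minus $d\delta$, by the triangle inequality with $\|u\|_1 \le d\delta$. Hence $2\delta \hat L_1^{\max} + 2d\delta^2 \ge 2\delta L_1^{\max}$, so $\hat\Delta^\star \ge \Delta^\star$ and $\hat\beta \ge \beta$. By the second step every score moves by at most $\hat\beta$. Applying the order-statistic Lipschitz fact again, this time to the score vector itself, the $m$-th and $(m+1)$-th smallest scores each move by at most $\hat\beta$, so $|g - \hat g| \le 2\hat\beta$. Consequently $\hat g > 4\hat\beta$ implies $g > 2\hat\beta \ge 2\beta$. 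The numerical slack remarks are simple arithmetic substitutions and are not part of the logical claim.
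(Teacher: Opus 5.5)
Your proof is correct and follows the same route as the paper. You expand the squared distance coordinatewise to get $\Delta^\star$, bound each summand of the sum of the $|A|-f-2$ smallest distances to get $\beta$, and show that no index can cross the $m$/$(m{+}1)$ boundary when $g > 2\beta$. Your order-statistic Lipschitz step, which handles a nearest-neighbour set that may change, and your set-level boundary argument make rigorous what the paper's short proof asserts in one line each; your derivation of the observable check $\hat g > 4\hat\beta$ matches the reasoning the paper places in the lemma statement itself.
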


\begin{proof}
Expand $(a_k+\varepsilon_k)^2 - a_k^2 = 2a_k\varepsilon_k + \varepsilon_k^2$ with
$a_k = x_{i,k}-x_{j,k}$, $|\varepsilon_k| \le \delta$, and sum over coordinates
for $\Delta^\star$; each of the $|A|{-}f{-}2$ summands of the sum-of-$(|A|{-}f{-}2)$-smallest
moves by at most $\Delta^\star$ under bounded per-entry perturbation, giving
$|\Delta s_i| \le (|A|{-}f{-}2)\Delta^\star = \beta$; two scores separated by
$g$ cannot reorder while each moves by at most $\beta$ if $g > 2\beta$.
\end{proof}

\begin{remark}[The margin condition in three tiers]\label{rem:margin}
Lemma~\ref{lem:margin} upgrades the qualifier of Theorem~\ref{thm:composed}, Part~B, from a hand-wave to a checkable
inequality, and empirically (Section~\ref{sec:extbattery}, $1600$ configurations
including deliberately near-tie ones at $Q16.16$) the quantised selection
\emph{never} flipped relative to a finer-precision reference. The condition
resolves in three tiers. (i) \emph{Deterministic}: the worst-case bound $g > 2\beta$
holds in $95.5$--$100\%$ of configurations with a median safety factor of $11.5$--$346$,
guaranteeing no flip there. (ii) \emph{High-probability}: an expected-error $\beta$
(smaller by $\Theta(\sqrt{d})$, since the typical per-coordinate perturbation is
far below the worst case $\delta/2$) raises coverage to $99.25$--$100\%$ (an
\emph{empirical} coverage, not a proved tail bound; because the score perturbation
is a sum of $d$ bounded independent terms, a Bernstein/sub-exponential tail bound
is derivable but we leave it open). (iii) \emph{Uncovered
remainder}: the $\lesssim 0.75\%$ left over are the $g \approx 0$ exact-tie
configurations that \emph{no} margin condition can cover ($g \to 0$). Under a
continuous honest input distribution exact ties are a null event, so this tier is
adversarial-only. We do not
argue these are harmless, since a score tie need not mean the two candidate vectors
are close (Lemma~\ref{lem:discont} shows a flip can move the aggregate by
$\Delta/m$); we report only that no flip occurred in any of the $1600$
configurations, including the tight-scale ones, where the minimum observed
boundary gap falls to $\approx 1.0\times 10^{-5}$ (emitted by the sweep). (The near-duplicate-\emph{value} bucket, by
construction, drives its twin vectors deep \emph{inside} the selection rather
than onto its boundary; small $g$ arises from small scale, not from the
pairing.) The formally open item
is a closed-form anti-concentration bound on the score-gap distribution; what the
measurement settles is the \emph{practical} question, that flips did not occur
on the tested configurations, not the worst case.
\end{remark}

\begin{theorem}[Composed guarantee, split]\label{thm:composed}
The consistency and robustness legs have different strengths and are stated
separately.

\emph{Part A. Consistency (consensus-free, given a PKI and delivery).} Under a PKI, at
quiescence on any common update set, all honest replicas output byte-identical
aggregates and output roots.

\emph{Part B. Robustness (conditional, imported, and rule-dependent).} The
robustness level is a property of the plugged-in Layer-2 rule, not of the
framework, and the two available rules give different guarantees. With
\emph{full-dimension multi-Krum} (the default; requires that the admitted set
contain at most $f$ Byzantine contributions, with $|A| \ge 2f+3$), each selected
vector satisfies the Krum criterion and lies within Krum's Euclidean-distance
envelope~\cite{blanchard2017machine}, under that result's statistical hypotheses
(i.i.d.\ honest contributions with bounded variance); the reported aggregate is
their coordinate-wise mean, which inherits the envelope by convexity (the mean of points in a Euclidean ball
lies in the ball). Blanchard et al.~\cite{blanchard2017machine} state their tight $(\alpha,f)$-resilience for the single Krum selection ($m{=}1$);
its extension to the averaged multi-Krum ($m{=}|A|{-}f{-}2$) is that work's standard
variant, imported here rather than re-derived. That envelope is
Krum-level; Mhamdi~et~al.~\cite{mhamdi2018hidden} showed it admits a
\emph{bounded} $O(\sqrt{d})$ coordinate bias under a coordinate-concentrated
attack, where an adversary that places its whole budget on one coordinate while
staying within the honest Euclidean spread is \emph{selected} by Krum, and the
plain mean carries the bias (Section~\ref{sec:extbattery}, X5c, measures it at
$2.29$--$2.35$ across the two regimes; a bounded bias within the envelope, not a violation). With \emph{Bulyan}
(\cite{mhamdi2018hidden}; requires $|A| \ge 4f+3$), the coordinate-wise
median-trim after selection restores coordinate robustness (also X5c). Either
way the envelope holds only under the \emph{quantisation-margin condition} of
Lemma~\ref{lem:margin}: the boundary score-margin $g$ must exceed $2\beta$, the
maximum quantisation-induced score perturbation. Under $g > 2\beta$ the quantised
selection equals the real one and the imported bound holds exactly; bounding the
flip probability when $g$ is random is the remaining open piece
(Remark~\ref{rem:margin}).

\emph{Part C. Accountability.} Every equivocating identity is eventually evicted
from all subsequent rounds at all honest replicas. Liveness is not claimed
anywhere: all guarantees are on states given deliveries.
\end{theorem}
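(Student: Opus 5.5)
The proof splits along the three parts of the statement, each reducing to results already established. \emph{Part A} comes from the product-lifting machinery. The replicated state $(\mathcal{C},\mathcal{E})$ is a product of an OR-Set and a G-Set, both CvRDTs, so by Theorem~\ref{thm:product} it is a CvRDT. Two honest replicas that are quiescent on a common update set therefore hold equal states.

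It remains to check that the full output map is a pure total function of this state, so that Theorem~\ref{thm:lifting} applies. The map is $(\mathcal{C},\mathcal{E}) \mapsto R(\mathcal{C},\mathcal{E},r)$ together with the output root, and I would verify purity component by component:
\begin{itemize}
\item Signature validity and proof validity are deterministic predicates under the PKI, which is where the PKI hypothesis enters.
\item $\mathrm{Convicted}$ and $\mathrm{Admit}$ are set-theoretic functions of these predicates (Definition~\ref{def:admit}).
\item $\mathrm{sort}_H$ is a state-canonical order.
\item The $Q16.16$ arithmetic is exact integer arithmetic under the stated width contract.
\item Ties are broken by content hash.
\item The degenerate cases $|A|\le f+2$ and $A=\emptyset$ are totalised by the conventions of Definition~\ref{def:resolve}.
\item The output root is a hash of the aggregate bytes.
\end{itemize}
By the ($\Leftarrow$) direction of Theorem~\ref{thm:iff}, this implementation is replica-consistent, and byte-identity follows.

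\emph{Part C} is a corollary of Theorem~\ref{thm:eviction}. Clause (ii) gives eventual conviction once some honest replica holds a proof, and Remark~\ref{rem:partition} gives proof formation once both halves are visible to an honest node. Both halves become visible under the delivery hypothesis, so every equivocating identity is eventually convicted. Monotonicity of $\mathrm{Convicted}$, clause (i), then makes the eviction permanent across all subsequent rounds.

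\emph{Part B} is where the real work lies, and I expect it to be the main obstacle. The plan has two steps:
\begin{enumerate}
\item \emph{Transfer the selection.} Under $g>2\beta$, Lemma~\ref{lem:margin} gives that the quantised multi-Krum selection equals the real-valued selection on the admitted set. With $|A|\ge 2f+3$ and at most $f$ Byzantine admitted entries, each selected vector then satisfies the Krum criterion in the sense of~\cite{blanchard2017machine}. I would import the Euclidean envelope as a hypothesis-conditional black box, with the multi-Krum variant taken from that work rather than re-derived.
\item \emph{Pass the envelope to the mean.} The envelope is a Euclidean ball, which is convex, so the coordinate-wise mean of selected points stays inside it. The only additional error is the $\le\delta/2$ value quantisation plus one floor unit in the mean, which perturbs the reported aggregate by at most roughly $\delta\sqrt{d}$.
\end{enumerate}
For Bulyan I would cite~\cite{mhamdi2018hidden} analogously under $|A|\ge 4f+3$.

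The delicate point in Part B is scope. The conclusion must be stated as conditional on the margin, the statistical hypotheses and the Byzantine count. In particular, the admission filter must not discard honest entries in a way that breaks the i.i.d.\ assumption. It does not: honest nodes never equivocate, so only the PKI-compromise case falls outside, and I would explicitly exclude that case.
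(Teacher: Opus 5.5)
Your proposal is correct and follows the paper's proof. Part~A is Theorem~\ref{thm:lifting} applied to the integer-exact, hash-ordered $R$; Part~B imports~\cite{blanchard2017machine} under the margin condition of Lemma~\ref{lem:margin}; Part~C is Theorem~\ref{thm:eviction}. Your explicit proof-formation step via Remark~\ref{rem:partition} and your accounting of the $\delta/2$-plus-floor slack are sensible details that the paper leaves implicit.

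One small caution on your closing remark. The claim that only key compromise can knock out honest entries also needs two things: invalid-signature entries must be filtered before the uniqueness clause is evaluated, and verification must be non-malleable (check X0.4). Otherwise a second valid-looking entry under an honest identity excludes that identity's round-$r$ contribution, and no proof ever forms.
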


\begin{proof}
Part~A is Theorem~\ref{thm:lifting} applied to $R$, whose arithmetic is integer and
exact, so no environmental-determinism assumption is needed, and $\mathrm{sort}_H$
is computed locally, so no coordinator or agreement is involved. Part~B imports the
guarantee of~\cite{blanchard2017machine} for full-dimension multi-Krum (which
$|A|=10, f=3$ satisfies, $10 \ge 9$, since in the base scenario all $10$ nodes contribute and none is convicted); we do not re-derive it. The margin condition
is not a cosmetic qualifier; by Lemma~\ref{lem:discont} multi-Krum is
discontinuous, so quantisation, a bounded perturbation, can flip the
selection near a score tie, moving the aggregate by $\Delta/m$, and the over-the-reals bound holds only
when the honest margin dominates that perturbation. Bounding the flip probability
under a margin distribution is stated open work (Section~\ref{sec:limitations}),
not an additive $\varepsilon$. Part~C is Theorem~\ref{thm:eviction}. Note the
asymmetry, in that Part~A needs neither the admitted-Byzantine bound nor $|A| \ge 2f+3$; those are robustness
conditions only, so the consensus-free claim attaches to Part~A alone.
\end{proof}

\begin{remark}[Interim divergence is inherent]\label{rem:interim}
SEC promises equal outputs on equal delivered sets. Before deliveries complete,
when one replica has seen an equivocation proof and another has not,
admitted sets differ and outputs differ. This window is not a defect of the
construction but the definition of eventual consistency; the construction's
contribution is that the window closes deterministically, with no
reconciliation protocol, upon delivery alone
(Theorem~\ref{thm:eviction}(iii)). Section~\ref{sec:experiments} reports this
window rather than hiding it.
\end{remark}

\section{Experimental Evaluation}\label{sec:experiments}

\subsection{Prototype and Method}

The prototype implements the construction end to end: Ed25519 signatures,
SHA-256 content addressing, Merkle commitments, the contribution OR-Set and
proof G-Set, admission, and the fixed-point kernel, with all kernel arithmetic
in arbitrary-precision integers (no floating point and no ambient randomness
anywhere in the resolve path). A simulator delivers events to in-process
replicas under adversarial schedules: independent per-replica permutations,
duplication, and partial-then-complete delivery. Default parameters: $n = 10$
nodes, $f \le 3$, 200-dimensional $Q16.16$ tensors. The reference kernel is
full-dimension multi-Krum (Definition~\ref{def:resolve}); it does no coordinate
subsampling and, on its deterministic path, consumes no randomness beyond the
content-hash tie-break. Test-generation randomness is seeded and the transcript
is reproducible from one command; the kernel itself is entropy-free. The scale is deliberately small, chosen
so that every check is exact and the pass criterion is binary: equality of
32-byte output roots. Cross-architecture identity is a property of integer
semantics rather than a measurement, so a single machine suffices to falsify
the remaining claims; we have not benchmarked performance, and none of the
checks below depends on timing.

\subsection{Falsification Battery}\label{sec:battery}

Table~\ref{tab:falsifier} summarises the 16 checks. All pass.

\begin{table}[t]
\centering
\caption{Falsification battery. Pass criterion is byte-equality of output
roots except where noted. Ablations are expected to \emph{diverge}: they
remove one mechanism each, and a pass means byte-identity was broken.}
\label{tab:falsifier}
\small
\begin{tabular}{@{}llp{7.6cm}@{}}
\toprule
Check & Result & Content \\
\midrule
X0.1 purity & pass & same state resolved twice $\rightarrow$ identical bytes \\
X0.2 canonicality & pass & 200 arrival permutations of one state $\rightarrow$ one root \\
X0.3 non-foldability & pass & explicit instance for Lemma~\ref{lem:nonassoc} \\
X2 order-invariance & pass & 20 adversarial schedules $\times$ 3 rounds $\times$ 5 replicas, with duplication and partial-then-complete delivery $\rightarrow$ all roots byte-identical \\
X3.a duplicate exclusion & pass & both halves of an equivocation visible $\rightarrow$ deduplicated identically at all replicas \\
X3.a proof formation & pass & observers of both halves auto-form the canonical proof \\
X3.b interim window & pass & replica without the proof diverges (Remark~\ref{rem:interim}), by construction of the test \\
X3.b late proof & pass & proof delivered after resolve $\rightarrow$ all replicas re-converge to the same root \\
X3.b permanence & pass & convicted identity excluded in a later round \\
X3.c proof spam & pass & invalid proof delivered $\rightarrow$ output unchanged \\
X4 partition (isolated) & pass & three partitions resolve to three distinct roots while isolated \\
X4 partition (healed) & pass & after healing, all 10 replicas equal a fresh replay's root \\
A1 tie-break ablation & diverges & arrival-index tie-breaking on a tied-score instance breaks byte-identity \\
A2 entropy ablation & diverges & a stochastic Layer-2 variant (seeded coordinate subsample) driven by a node-local generator instead of the state-derived seed breaks byte-identity \\
A3 arithmetic ablation & diverges & floating-point kernel, two accumulation orders, breaks the resolve transcript \\
X0.4 duplicate signature & pass & a malleated second signature over the same tensor is either rejected by canonical verification or excluded by the uniqueness clause; roots agree under both arrival orders (defence in depth for Definition~\ref{def:admit}) \\
\bottomrule
\end{tabular}
\end{table}

Three aspects deserve comment. First, the ablations are as informative as the
positive checks: each removes exactly one mechanism (canonical tie-breaking,
state-derived entropy, integer arithmetic) and each breaks byte-identity,
which is what the design claims and what a design with redundant mechanisms
would fail to show. Second, the late-proof check (X3.b) exercises
Theorem~\ref{thm:eviction}(iii) in the adversarially interesting order:
a replica commits an aggregate that includes the eventually-convicted identity,
receives the proof afterwards, and re-converges to the other replicas' root
with no protocol beyond delivery. Third, earlier versions of the construction
failed X3.b and X3.c: seeding the kernel from the full-state Merkle root let a
garbage proof, correctly ignored by admission, still perturb the output
through the seed. The falsifier drove the two design corrections reported in
Section~\ref{sec:kernel} (admitted-set seeding; canonical proof objects). We
report this because the failures are informative about where the determinism
obligations actually sit.

\subsection{Scale, Robustness, and Cost Battery}\label{sec:extbattery}

A second battery (Table~\ref{tab:battery}) tests what the falsification checks do
not: order-invariance at larger scale, the robustness leg an adversary actually
targets, the guarantee-class contrast, and the cost of quantisation.

\begin{table}[t]
\centering
\caption{Extended battery. X-checks measure; the A4 ablation is expected to
\emph{diverge} (removing per-round duplicate exclusion lets an equivocator split
admitted sets), and a pass for A4 means byte-identity was broken. All pass.}
\label{tab:battery}
\small
\begin{tabular}{@{}llp{8.0cm}@{}}
\toprule
Check & Result & Content \\
\midrule
X1 order-invariance at scale & pass & 50 rounds $\times$ 7 replicas $\times$ 30 adversarial schedules with duplication $\rightarrow$ all roots byte-identical \\
X5a fixed-point tracks float & pass & full-dimension multi-Krum kernel matches its float64 reference to $\sim 10^{-4}$ across four attack types (the consistency arm of robustness) \\
X5b better than naive & pass & under large-norm / sign-flip / colluding-shift attacks the aggregate deviation from the honest mean stays far below the naive all-inclusive mean, a sanity contrast, \emph{not} a verification of the imported Blanchard envelope (which Theorem~\ref{thm:composed}, Part~B imports rather than measures); stealth within-norm drift is not filtered, reported as a non-claim \\
X9 grinding Byzantine & pass & a last-mover adversary that searches 400 candidates to minimise its own full-dimension Krum score is selected in 0 of 900 attempts (300 trials $\times$ 3 grinders), and the aggregate stays within the honest-only Krum spread (the maximum deviation among honest-only Krum aggregates across seeds) in all 300 trials; the subsample removal (Remark~\ref{rem:review}) removes this attack vector \\
X5c coordinate attack ($n{=}16, f{=}3$) & pass & the Mhamdi coordinate-concentrated attack (whole budget on one axis, within the honest Euclidean spread) is \emph{selected} by Krum, all 3 of 3 Byzantine vectors, in each of the two seeded regime instances (colluding / non-colluding): plain multi-Krum shows a coordinate deviation of $2.29$ (colluding) / $2.35$ (non-colluding), a \emph{bounded} $O(\sqrt{d})$ Krum-level bias within the imported envelope, not a violation, while Bulyan's per-coordinate median-trim drives it to $0.00$ (in both colluding and non-colluding regimes), because the $\theta{=}n{-}2f$ selected are honest-majority per coordinate \\
X8 guarantee-class contrast & pass & on non-comparable objects (multi-round gossip parameters vs.\ a single replicated aggregate), gossip-averaging SGD leaves replicas $\approx 0.09$ apart in $L_\infty$ at quiescence while \ACFA{} agrees to $0$ bytes, a guarantee-class contrast (statistical vs.\ bitwise), not a performance comparison \\
X6 quantisation cost & pass & on one learnable task, fixed-point and float robust-aggregation training reach the same held-out accuracy ($0.985$, three significant figures, single seed); this bounds the $Q16.16$ quantisation cost below the metric resolution for that task only, not in general \\
A4 dedup ablation & diverges & removing the per-round uniqueness clause lets an equivocator split admitted sets between proof-lagged replicas $\rightarrow$ divergent roots \\
\bottomrule
\end{tabular}
\end{table}

Two points are worth stating plainly. The X6 delta is genuinely zero only because
$Q16.16$ is lossless \emph{for this task}; it is not a general claim that
quantisation is free, and X5 shows the kernel tracks its float reference rather
than being identical to it. And X9 is the test the adversarial review demanded:
the earlier subsampled kernel was grindable, and only the full-dimension rule
(Remark~\ref{rem:review}) withstands a last-mover search, but robustness
remains an \emph{imported} guarantee under the quantisation-margin condition of
Theorem~\ref{thm:composed}, Part~B, not a contribution of this paper.

\paragraph{The quantisation-margin measurement.} To test
Lemma~\ref{lem:margin} we swept $1600$ configurations (honest random and
near-duplicate-value, each at two weight scales; fixed RNG seed $3$, and the
qualitative three-tier resolution (Remark~\ref{rem:margin}) is the claim, the exact band is
seed-conditional), comparing the $Q16.16$ multi-Krum selection to
a finer-precision reference and recording the boundary margin $g$ against the
worst-case perturbation $2\beta$. There were \textbf{zero selection flips} in all
$1600$; the sufficient condition $g > 2\beta$ held in $95.5$--$100\%$ of
configurations with a median safety factor of $11.5$ to $346$ (a tighter
expected-error $\beta$ raises this to $99.25$--$100\%$), and even the residual
near-tie configurations, where the worst-case $\beta$ leaves the condition
inconclusive, flipped nothing. We do not claim the uncovered tail is harmless;
by Lemma~\ref{lem:discont} a flip there would move the aggregate by $\Delta/m$,
we report only that none occurred across the $1600$ configurations, including
the tight-scale bucket whose minimum observed boundary gap is
$\approx 1.0\times 10^{-5}$, the genuine boundary probe (Remark~\ref{rem:margin}).

\subsection{Observations on the Determinism Envelope}\label{sec:ablations}

Two incidental measurements from constructing ablation A3 bear on how
floating-point non-determinism manifests in practice. CPython~3.12+
implements the built-in \texttt{sum} with compensated (Neumaier)
summation~\cite{neumaier1974rundungsfehler,cpython2022sum}, so
a naive-versus-\texttt{fsum} comparison shows no divergence; the divergence
appears in plain loop accumulation, the order-dependent pattern of C and BLAS
kernels~\cite{goldberg1991floating,higham1993accuracy},
where 39 of the 45 pairwise distance sums differ between two accumulation
orders in the artifact instance (the count is emitted by the battery itself). Separately, values quantised to $Q16.16$ and then
converted to doubles sum \emph{exactly} in floating point regardless of order,
because all terms lie on a common dyadic grid with every partial sum's numerator below $2^{53}$ (comfortably satisfied at this scale);
quantisation alone therefore buys order-invariance for sums, though not for
the full nonlinear kernel. Neither observation substitutes for the integer
kernel; both delimit where the hazard lives. $Q16.16$ also fixes the dynamic
range at $\pm 2^{15}$ with $2^{-16}$ resolution; gradient components far below
that resolution quantise to zero, so the format is a deployment parameter, not a
universal choice. Any per-tensor scaling introduced to widen the range must
itself be a deterministic state-derived function; otherwise it reopens the
exact non-determinism the integer kernel exists to close.

\section{Discussion}\label{sec:discussion}

\paragraph{Relation to the two-layer architecture.} \ACFA{} is the L4
extension sketched in~\cite{gillespie2026crdtmerge} made concrete. Where that
architecture assumed environmental determinism (its Assumption~10), the present
construction discharges the obligation with integer arithmetic, which
Lemma~\ref{lem:discont} shows is not optional for selection rules.

\paragraph{Relation to E4, and the reduction-class distinction.} A closer
relative is the author's own E4 protocol
(\texttt{crdt-merge}~v0.9.5+;~\cite{gillespie2026e4}), which realises the
single-shared-lattice form of that sketch: a multi-dimensional trust lattice in
which trust propagates as CRDT data through the same pipeline it validates. E4
already provides several primitives this paper also uses, namely cryptographically
self-authenticating equivocation evidence (two conflicting signed operations from
one peer at the same sequence number), a canonical-order deferral of an
order-dependent resolver, and a fixed-point integer accumulation mode, so E4
is prior work this paper builds on, not around, and the distinction is precise.
It lies in the \emph{reduction class} and the \emph{actuation}, not in the use of
proofs. E4's resolver is a \emph{trust-weighted mean with outlier detection},
a smooth, order-independent aggregate. For a smooth aggregate, deterministic
quantisation of the parameters, and full-precision re-anchoring when accumulated
rounding drifts past a threshold, suffice for convergence by the \emph{tolerance}
route. \ACFA{}'s reduction is instead a distance-geometry \emph{selector}
(multi-Krum), which Lemma~\ref{lem:discont} shows is \emph{discontinuous}. A
bounded perturbation can flip the selection and move the aggregate by
$\Delta/m$. For such a selector the tolerance route is closed: no
quantise-and-re-anchor scheme can guarantee agreement, because the disagreement
is not a bounded numeric drift but a discrete selection flip. This is why
\ACFA{} discharges determinism with \emph{exact integer arithmetic} rather than
quantisation-with-re-anchoring, a different mechanism forced by the selection
class, not a refinement of the same one. Likewise, where E4 converts equivocation
evidence into a \emph{graded, progressive} trust reduction that feeds the
trust-weighted mean, \ACFA{} converts the same evidentiary kind of proof into a
\emph{binary, permanent} conviction that gates admission, a different effect
suited to a selector that has no notion of a soft weight. The evidentiary
primitive is shared; the aggregation it feeds, and the arithmetic that
aggregation forces, are what differ.

\paragraph{Multi-round execution and epoch forking.} The guarantee is
per-round. Deciding that round $r$ is \emph{closed} is quiescence detection
(a liveness question a CRDT alone does not answer), and replicas that cut at
different points fork the round-DAG until they reconcile
(Section~\ref{sec:intro}): one replica may train on round-$r$ aggregate $Y_r$
while another, having admitted a late proof, has re-resolved to $Y_r'$ and moved
on. Per-round safety is unaffected by the cut choice (any two replicas that
close $r$ on the same contribution set compute the same bytes), but multi-round
\emph{liveness} requires an orthogonal epoch-synchronisation mechanism
(synchronous timeouts, threshold signatures on round closure, or any
transport-layer commitment) which this paper explicitly delegates. Such a
mechanism agrees only on \emph{when} a round closes, never on the aggregate
\emph{value}, which stays a consensus-free pure function of the admitted set; the
coordination removed is agreement on the computation, and only round-closure
timing is delegated.

\paragraph{The adaptive margin adversary.} Lemma~\ref{lem:margin} is a
worst-case condition against \emph{static} inputs; an adaptive last-mover that
observes honest contributions could compute the exact integer scores and craft a
vector forcing $g \approx 0$, landing the configuration in the uncovered tier
of Remark~\ref{rem:margin}. Two structural features mitigate this, though
neither closes it. First, for a \emph{single-shot} adversary, ties are broken by
content hash, a state-canonical order it cannot steer independently of its own
vector's bytes, so forcing a tie costs it control over \emph{which} side wins,
degrading a targeted flip to an untargeted $\Delta/m$ perturbation, which is what
the imported envelope already prices in. A \emph{grinding} last-mover, however,
may search jointly over the tie constraint \emph{and} its own resulting hash, and
so is not excluded by this argument; that residual is exactly the item
Remark~\ref{rem:margin} names as open (a closed-form anti-concentration bound on
the score-gap distribution). Second, the distinct get-\emph{admitted} grinding
variant, a last-mover searching candidates to place itself in the selection,
is measured and closed (X9: $0/900$ selections).

\paragraph{Provable versus statistical exclusion.} The construction separates
two adversary classes cleanly. Equivocation is \emph{convicted} provably and
evicted \emph{permanently}, with no vote; malformed signatures are excluded by
verification without conviction; per-round duplicates are excluded by the
uniqueness clause, likewise without conviction. Adversaries that stay within protocol and submit
poisoned values are bounded \emph{statistically} by the robust rule, per round.
The composition is strictly stronger than per-round filtering alone (an
equivocator is removed from all future rounds, not re-filtered each round),
but it is not a claim that all Byzantine behaviour is provable: a
within-bounds poisoner leaves no proof.

\paragraph{Costs.} Relative to a coordinator, the costs are full-state or
delta gossip of the contribution set, one signature per contribution and two
verifications per proof; the fixed-point kernel (integer multi-Krum is
$O(n^2 d)$~\cite{blanchard2017machine} like its float counterpart); and quantisation of contributions to
$Q16.16$. Relative to consensus-based decentralisation, there are no rounds,
no quorums, and no leader; the price is the weaker guarantee, eventual,
not immediate, agreement.

\subsection{Limitations and Future Work}\label{sec:limitations}

The robustness leg (Theorem~\ref{thm:composed}, Part~B) is the weakest part and the
adversarial review sharpened it. (a) \emph{The quantisation-margin condition is
the real open problem}: because multi-Krum is discontinuous
(Lemma~\ref{lem:discont}), $Q16.16$ rounding can flip the selection near a score tie, moving the
aggregate by $\Delta/m$, so the imported over-the-reals bound holds only
when the honest score margin dominates the quantisation perturbation; bounding
the flip probability under a margin distribution is open, and is not the additive
$\varepsilon$ an earlier draft implied. (b) \emph{Robustness needs an admitted-population condition} ($|A| \ge 2f+3$ among admitted contributions), a
delivery/completeness condition that the consistency leg (Part~A) does not require;
we separate the two so the consensus-free claim is not overstated. (c) \emph{The default rule is Krum-level, not Bulyan-level}: full-dimension
multi-Krum is vulnerable to coordinate-concentrated attacks~\cite{mhamdi2018hidden}
(Section~\ref{sec:extbattery}, X5c); Bulyan restores coordinate robustness but
needs $|A| \ge 4f+3$, a stronger admitted-population condition than the $|A| \ge 2f+3$ of the default. The
consistency guarantee is unaffected by the choice. More fundamentally, Krum and
Bulyan are themselves defeated by within-norm attacks that stay inside the honest
spread~\cite{baruch2019little,shejwalkar2021manipulating}, exactly the stealth
drift X5b reports as unfiltered. This is a property of the imported rule, not the
framework; \ACFA{} replicates whatever robust rule is plugged in, so a stronger
successor to Krum inherits the same consistency and accountability guarantees
unchanged. (d)
\emph{Cross-architecture byte-identity is argued, not measured}: the integer
kernel is exact under a fixed width contract, but the prototype runs one
architecture; a heterogeneous run (or a fixed-width port with documented overflow
semantics) is owed. The prototype validates correctness, not performance: no
timing, bandwidth, or large-$n$ measurements are reported, and delta-state
propagation~\cite{almeida2018delta} would be required at scale, as in the
parent architecture. The equivocation G-Set grows monotonically; rate-limiting
proof spam (all of it invalid by construction, but bandwidth-consuming) and
garbage-collecting evidence for departed identities are engineering questions
we have not addressed. Sybil resistance is delegated to the PKI. Finally, the
accountable class here is equivocation; extending self-authenticating evidence
to other violation classes (for example, provably malformed training
provenance) is future work.

\section{Related Work}\label{sec:related}

\paragraph{Byzantine-robust aggregation, central and decentralised.}
Krum~\cite{blanchard2017machine}, trimmed mean~\cite{yin2018byzantine},
Bulyan~\cite{mhamdi2018hidden}, and FLTrust~\cite{cao2021fltrust} define and
analyse robust rules at a coordinator. This paper does not modify the rules or
their bounds; it replicates their evaluation without the coordinator. The
difficulty of doing so is known: applying geometry-based selectors in a
decentralised topology makes honest nodes filter \emph{different} peer sets and
therefore disagree. Wu, Chen and Ling~\cite{wu2023byzantine} name this
``disagreement'' precisely and address it with an iterative outlier scissor (IOS)
scheme carrying optimisation-convergence guarantees;
Cajaraville-Aboy et al.~\cite{cajaraville2025byzantine} (WFAgg) combine complementary
distance-, cosine-, and temporal-based filters that individually miss different
attacks, and propose a weighting aggregator;
and BALANCE~\cite{fang2024byzantine} gives a decentralised Byzantine-robust
method with convergence guarantees. Closest in surface form is
BRACE~\cite{fang2025brace}, which distributes one quantised aggregate through the
ring's lockstep reduce with a coordinate-wise \emph{consensus} on the sign and a
convergence guarantee; any agreement there is a product of the consensus this
work removes, not of a pure function of converged state. We do not claim the
problem is open. Our
distinction is the \emph{guarantee class}: those works achieve statistical or
optimisation convergence in a neighbour-gossip topology, where honest replicas end
near one another. \ACFA{} instead achieves \emph{exact byte-identical} agreement
on the robust aggregate over a replicated contribution set, via deferred
deterministic replay (the accountability engine of
PeerReview~\cite{haeberlen2007peerreview}, here discharged for a discontinuous
selector by exact integer arithmetic) plus provable accountability. Consistency, not convergence
rate, is the object.

\paragraph{Byzantine fault tolerance in CRDTs.}
Kleppmann~\cite{kleppmann2022byzantine} makes CRDT \emph{state} tolerate
Byzantine replicas via hash-graph techniques; the object protected is the
replicated set itself. The blocklace~\cite{almeida2024blocklace} goes further,
using self-authenticating signed-hash evidence to detect equivocation and
eventually \emph{exclude} equivocators inside a CRDT without consensus, the
exclusion-by-proof primitive our accountability leg reuses; concurrently,
Brocco~\cite{brocco2026composable} obtains Byzantine-resilient convergence by
deferring a deterministic interpretation over accumulated CRDT state, and a
concurrent follow-up~\cite{brocco2026decoupling} decouples identity-based from
content-based trust for fine-grained post-compromise handling; our admission
gate is deliberately coarser, a \emph{binary, permanent} identity conviction,
not a graded per-update trust, because the discontinuous selector it gates
admits no soft weight. The
present work is complementary to all three: it assumes eventual pairwise
delivery among honest replicas for state propagation (the $|A| \ge 2f+3$ bound
constrains the contribution population, not propagation) and protects a
non-associative \emph{reduction over} the state, gating that reduction's
admission by the exclusion proof, an object those constructions do not
protect: the blocklace and Brocco's layer defend convergence of the replicated
state or of a benign interpretation of it, whereas the selector here is
discontinuous in and adversarially coupled across its inputs. Where a subclass of CRDTs can
simply \emph{tolerate} equivocation and retain strong eventual
consistency~\cite{jacob2021equivocation}, our reduction cannot: two admitted
entries from one equivocator would flip the discontinuous selector, so
equivocation must be \emph{excluded}, not absorbed. Accountable
consensus~\cite{civit2021polygraph} detects and proves misbehaviour
\emph{within} agreement protocols; we import the self-authenticating-evidence
idea into a setting with no agreement protocol at all.

\paragraph{Decentralised training.}
Gossip SGD~\cite{lian2017decentralized,koloskova2019decentralized} and
low-communication schemes~\cite{douillard2023diloco} target statistical
convergence of training; replicas hold similar, not identical, parameters, and
Byzantine handling (where present) is per-replica and unsynchronised.
Federated learning~\cite{mcmahan2017communication,kairouz2021advances}
centralises aggregation. The guarantee class here, bitwise agreement on the
robust aggregate, coordinator-free, is different from both, and is the
guarantee the state-convergence line~\cite{shapiro2011conflict,
gillespie2026crdtmerge} provides once the reduction is made a pure function of
converged state.

\paragraph{Composition of CRDTs with queries.}
That a deterministic query of a converged CRDT is replica-consistent is
implicit in the OR-Set definition~\cite{shapiro2011comprehensive}, which builds
query-agreement into the datatype rather than stating it as a separate result;
our prior work formalised it for one
OR-Set and a deterministic strategy~\cite{gillespie2026crdtmerge}. Which
computations over a lattice are order-independent has a longer lineage. LVars
restricts reads to monotone threshold queries~\cite{kuper2013lvars}, and the
CALM line ties coordination-freedom to
monotonicity~\cite{conway2012logic,hellerstein2020calm}; our converse
(Theorem~\ref{thm:iff}) instead admits \emph{non-monotone} reductions by
deferring to a converged (quiesced) set, characterising consistency by
permutation-invariance plus state-derived entropy rather than by monotonicity.
The present
contribution is the product form with an evidence lattice, the explicit
admission of non-monotone stochastic reductions with state-derived entropy,
the converse (Theorem~\ref{thm:iff}), and the demonstration that this suffices
for an established class of Byzantine-robust rules.

\section{Conclusion}\label{sec:conclusion}

Byzantine-robust aggregation rules are globally coupled, non-associative,
and discontinuous, three properties that appear to demand a
coordinator or a consensus layer. They do not. The rules need an agreed set
and an agreed exclusion predicate; both are join-semilattices, both converge
without coordination, and everything downstream is a pure function of their
product, which Theorem~\ref{thm:lifting} makes replica-consistent wholesale.
The contribution is a composition and an application, not a new algebra: the
product-lifting theorem is elementary (Theorem~\ref{thm:lifting}) and its value
is the converse (Theorem~\ref{thm:iff}) plus the demonstration that a
discontinuous, globally-coupled robust selector can be made
strong-eventually-consistent with exact arithmetic and accountable eviction.
Consensus-free accountable eviction is itself established: self-healing lattice
agreement evicts faulty replicas over a lattice via undeniable
proofs~\cite{freitas2021selfhealing}, and the blocklace excludes equivocators
inside a CRDT by self-authenticating evidence~\cite{almeida2024blocklace}. What
we did not find in the Byzantine-CRDT or Byzantine-FL prior art is that primitive
composed with a discontinuous, globally-coupled robust \emph{selector} as the
gated object, under the exact-arithmetic determinism obligation that
discontinuity forces.
The two obligations that remain, exact arithmetic, because discontinuity
voids any tolerance (Lemma~\ref{lem:discont}), and state-derived entropy,
because the converse theorem permits nothing else
(Theorem~\ref{thm:iff}), are both dischargeable by construction, and the
16-check falsification battery, including three ablations that break
byte-identity by removing one mechanism each (a fourth, A4, in the extended
battery), locates the load-bearing points
empirically. Equivocation is handled by evidence rather than by vote: a proof
any party can check offline, accumulated in a grow-only lattice, evicting
deterministically even when it arrives after the fact. The result is a
coordinator-free replication discipline for exactly the aggregation rules that
adversarial distributed learning already uses, with the scope limits stated
in Section~\ref{sec:intro}: consistency not accuracy, safety not liveness,
and proofs only for what is provable.

\paragraph{Reproducibility.} The prototype and all test harnesses will be
released publicly upon publication and are available from the author on request
in the interim. Every reported number is regenerated by a fully seeded script: \texttt{python3 falsifier.py}
reproduces the 16-check transcript of Table~\ref{tab:falsifier} (and Appendix~\ref{app:falsifier});
\texttt{python3 battery.py} the extended battery of Table~\ref{tab:battery}
(rows X1, X5a, X5b, X6, X8, A4), with \texttt{python3 x9\_sweep.py} regenerating
the X9 grinding sweep ($0/900$; $300$ trials) and \texttt{python3 x5c\_bulyan.py}
the X5c coordinate-attack row ($2.29/2.35$, Bulyan $0.00$); and
\texttt{python3 t5\_flipprob.py} the quantisation-margin sweep of
Section~\ref{sec:extbattery} (RNG seed $3$).

\bibliographystyle{plain}
\bibliography{references}

@article{almeida2018delta,
  author    = {Paulo S\'ergio Almeida and Ali Shoker and Carlos Baquero},
  title     = {Delta state replicated data types},
  journal   = {Journal of Parallel and Distributed Computing},
  volume    = {111},
  pages     = {162--173},
  year      = {2018},
  doi       = {10.1016/j.jpdc.2017.08.003},
}

@inproceedings{blanchard2017machine,
  author    = {Peva Blanchard and El Mahdi El Mhamdi and Rachid Guerraoui and Julien Stainer},
  title     = {Machine Learning with Adversaries: Byzantine Tolerant Gradient Descent},
  booktitle = {Advances in Neural Information Processing Systems 30 (NeurIPS)},
  pages     = {119--129},
  year      = {2017},
}

@article{kairouz2021advances,
  author    = {Peter Kairouz and H. Brendan McMahan and Brendan Avent and Aur\'elien Bellet and Mehdi Bennis and Arjun Nitin Bhagoji and Kallista Bonawitz and Zachary Charles and Graham Cormode and Rachel Cummings and Rafael G. L. D'Oliveira and Hubert Eichner and Salim El Rouayheb and David Evans and Josh Gardner and Zachary Garrett and Adri\`a Gasc\'on and Badih Ghazi and Phillip B. Gibbons and Marco Gruteser and Zaid Harchaoui and Chaoyang He and Lie He and Zhouyuan Huo and Ben Hutchinson and Justin Hsu and Martin Jaggi and Tara Javidi and Gauri Joshi and Mikhail Khodak and Jakub Kone\v{c}n\'y and Aleksandra Korolova and Farinaz Koushanfar and Sanmi Koyejo and Tancr\`ede Lepoint and Yang Liu and Prateek Mittal and Mehryar Mohri and Richard Nock and Ayfer \"Ozg\"ur and Rasmus Pagh and Mariana Raykova and Hang Qi and Daniel Ramage and Ramesh Raskar and Dawn Song and Weikang Song and Sebastian U. Stich and Ziteng Sun and Ananda Theertha Suresh and Florian Tram\`er and Praneeth Vepakomma and Jianyu Wang and Li Xiong and Zheng Xu and Qiang Yang and Felix X. Yu and Han Yu and Sen Zhao},
  title     = {Advances and Open Problems in Federated Learning},
  journal   = {Foundations and Trends in Machine Learning},
  volume    = {14},
  number    = {1--2},
  pages     = {1--210},
  year      = {2021},
  doi       = {10.1561/2200000083},
}

@inproceedings{koloskova2019decentralized,
  author    = {Anastasia Koloskova and Sebastian U. Stich and Martin Jaggi},
  title     = {Decentralized Stochastic Optimization and Gossip Algorithms with Compressed Communication},
  booktitle = {Proceedings of the 36th International Conference on Machine Learning (ICML)},
  volume    = {97},
  pages     = {3478--3487},
  year      = {2019},
}

@inproceedings{lian2017decentralized,
  author    = {Xiangru Lian and Ce Zhang and Huan Zhang and Cho-Jui Hsieh and Wei Zhang and Ji Liu},
  title     = {Can Decentralized Algorithms Outperform Centralized Algorithms? {A} Case Study for Decentralized Parallel Stochastic Gradient Descent},
  booktitle = {Advances in Neural Information Processing Systems 30 (NeurIPS)},
  year      = {2017},
}

@inproceedings{mcmahan2017communication,
  author    = {H. Brendan McMahan and Eider Moore and Daniel Ramage and Seth Hampson and Blaise Ag\"uera y Arcas},
  title     = {Communication-Efficient Learning of Deep Networks from Decentralized Data},
  booktitle = {Proceedings of the 20th International Conference on Artificial Intelligence and Statistics (AISTATS)},
  pages     = {1273--1282},
  year      = {2017},
}

@article{sanjuan2020merkle,
  author    = {Hector Sanjuan and Samuli Poyhtari and Pedro Teixeira and Ioannis Psaras},
  title     = {{Merkle-CRDTs}: {Merkle-DAGs} Meet {CRDTs}},
  journal   = {arXiv preprint arXiv:2004.00107},
  year      = {2020},
}

@techreport{shapiro2011comprehensive,
  author      = {Marc Shapiro and Nuno Pregui\c{c}a and Carlos Baquero and Marek Zawirski},
  title       = {A Comprehensive Study of Convergent and Commutative Replicated Data Types},
  institution = {INRIA},
  number      = {RR-7506},
  year        = {2011},
}

@inproceedings{shapiro2011conflict,
  author    = {Marc Shapiro and Nuno Pregui\c{c}a and Carlos Baquero and Marek Zawirski},
  title     = {Conflict-Free Replicated Data Types},
  booktitle = {Proceedings of the 13th International Symposium on Stabilization, Safety, and Security of Distributed Systems (SSS)},
  series    = {Lecture Notes in Computer Science},
  volume    = {6976},
  pages     = {386--400},
  publisher = {Springer},
  year      = {2011},
  doi       = {10.1007/978-3-642-24550-3_29},
}

@article{vogels2009eventually,
  author    = {Werner Vogels},
  title     = {Eventually Consistent},
  journal   = {Communications of the ACM},
  volume    = {52},
  number    = {1},
  pages     = {40--44},
  year      = {2009},
  doi       = {10.1145/1435417.1435432},
}

@inproceedings{yin2018byzantine,
  author    = {Dong Yin and Yudong Chen and Kannan Ramchandran and Peter L. Bartlett},
  title     = {Byzantine-Robust Distributed Learning: Towards Optimal Statistical Rates},
  booktitle = {Proceedings of the 35th International Conference on Machine Learning (ICML)},
  pages     = {5650--5659},
  year      = {2018}
}

@inproceedings{mhamdi2018hidden,
  author    = {El Mahdi El Mhamdi and Rachid Guerraoui and S{\'e}bastien Rouault},
  title     = {The Hidden Vulnerability of Distributed Learning in {Byzantium}},
  booktitle = {Proceedings of the 35th International Conference on Machine Learning (ICML)},
  pages     = {3521--3530},
  year      = {2018}
}

@inproceedings{kleppmann2022byzantine,
  author    = {Martin Kleppmann},
  title     = {Making {CRDTs} {Byzantine} Fault Tolerant},
  booktitle = {Proceedings of the 9th Workshop on Principles and Practice of Consistency for Distributed Data (PaPoC)},
  pages     = {8--15},
  year      = {2022}
}

@inproceedings{cao2021fltrust,
  author    = {Xiaoyu Cao and Minghong Fang and Jia Liu and Neil Zhenqiang Gong},
  title     = {{FLTrust}: {Byzantine}-Robust Federated Learning via Trust Bootstrapping},
  booktitle = {Proceedings of the Network and Distributed System Security Symposium (NDSS)},
  year      = {2021}
}

@article{dolev1986approximate,
  author  = {Danny Dolev and Nancy A. Lynch and Shlomit S. Pinter and Eugene W. Stark and William E. Weihl},
  title   = {Reaching Approximate Agreement in the Presence of Faults},
  journal = {Journal of the ACM},
  volume  = {33},
  number  = {3},
  pages   = {499--516},
  year    = {1986}
}

@misc{douillard2023diloco,
  author = {Arthur Douillard and Qixuan Feng and Andrei A. Rusu and Rachita Chhaparia and Yani Donchev and Adhiguna Kuncoro and Marc'Aurelio Ranzato and Arthur Szlam and Jiajun Shen},
  title  = {{DiLoCo}: Distributed Low-Communication Training of Language Models},
  note   = {arXiv preprint arXiv:2311.08105},
  year   = {2023}
}

@inproceedings{civit2021polygraph,
  author    = {Pierre Civit and Seth Gilbert and Vincent Gramoli},
  title     = {Polygraph: Accountable {Byzantine} Agreement},
  booktitle = {Proceedings of the 41st IEEE International Conference on Distributed Computing Systems (ICDCS)},
  pages     = {403--413},
  doi       = {10.1109/ICDCS51616.2021.00046},
  year      = {2021}
}

@article{wu2023byzantine,
  author  = {Zhaoxian Wu and Tianyi Chen and Qing Ling},
  title   = {Byzantine-Resilient Decentralized Stochastic Optimization With Robust Aggregation Rules},
  journal = {IEEE Transactions on Signal Processing},
  volume  = {71},
  pages   = {3179--3195},
  year    = {2023},
  doi     = {10.1109/TSP.2023.3300629}
}

@article{cajaraville2025byzantine,
  author  = {Diego Cajaraville-Aboy and Ana Fern{\'a}ndez-Vilas and Rebeca P. D{\'i}az-Redondo and Manuel Fern{\'a}ndez-Veiga},
  title   = {Byzantine-Robust Aggregation for Securing Decentralized Federated Learning},
  journal = {IEEE Access},
  volume  = {13},
  pages   = {190947--190963},
  year    = {2025},
  doi     = {10.1109/ACCESS.2025.3629864}
}

@inproceedings{fang2024byzantine,
  author    = {Minghong Fang and Zifan Zhang and Hairi and Prashant Khanduri and Jia Liu and Songtao Lu and Yuchen Liu and Neil Zhenqiang Gong},
  title     = {Byzantine-Robust Decentralized Federated Learning},
  booktitle = {Proceedings of the 2024 ACM SIGSAC Conference on Computer and Communications Security (CCS)},
  year      = {2024},
  doi       = {10.1145/3658644.3670307}
}

@misc{gillespie2026crdtmerge,
  author = {Ryan Gillespie},
  title  = {Conflict-Free Replicated Data Types for Neural Network Model Merging: A Two-Layer Architecture Enabling {CRDT}-Compliant Model Merging Across 26 Strategies},
  note   = {arXiv preprint arXiv:2605.19373},
  year   = {2026}
}

@misc{gillespie2026e4,
  author = {Ryan Gillespie},
  title  = {Recursive Trust-as-Data Binding for Byzantine Fault Tolerant Conflict-Free Replicated Data Type Synchronisation},
  note   = {UK Patent Application GB2608127.3; implemented in the {\tt crdt-merge} library v0.9.5+},
  year   = {2026}
}

@article{baquero2017composition,
  author  = {Carlos Baquero and Paulo S\'ergio Almeida and Alcino Cunha and Carla Ferreira},
  title   = {Composition in State-based Replicated Data Types},
  journal = {Bulletin of the European Association for Theoretical Computer Science (EATCS)},
  volume  = {123},
  year    = {2017},
}

@inproceedings{kuper2013lvars,
  author    = {Lindsey Kuper and Ryan R. Newton},
  title     = {{LVars}: Lattice-based Data Structures for Deterministic Parallelism},
  booktitle = {Proceedings of the 2nd ACM SIGPLAN Workshop on Functional High-Performance Computing (FHPC)},
  pages     = {71--84},
  year      = {2013},
  doi       = {10.1145/2502323.2502326},
}

@inproceedings{conway2012logic,
  author    = {Neil Conway and William R. Marczak and Peter Alvaro and Joseph M. Hellerstein and David Maier},
  title     = {Logic and Lattices for Distributed Programming},
  booktitle = {Proceedings of the 3rd ACM Symposium on Cloud Computing (SoCC)},
  year      = {2012},
  doi       = {10.1145/2391229.2391230},
}

@article{hellerstein2020calm,
  author  = {Joseph M. Hellerstein and Peter Alvaro},
  title   = {Keeping {CALM}: When Distributed Consistency Is Easy},
  journal = {Communications of the ACM},
  volume  = {63},
  number  = {9},
  pages   = {72--81},
  year    = {2020},
  doi     = {10.1145/3369736},
}

@article{almeida2024blocklace,
  author  = {Paulo S\'ergio Almeida and Ehud Shapiro},
  title   = {The Blocklace: A {Byzantine}-repelling and Universal Conflict-free Replicated Data Type},
  journal = {arXiv preprint arXiv:2402.08068},
  year    = {2024},
}

@inproceedings{freitas2021selfhealing,
  author    = {Luciano Freitas de Souza and Petr Kuznetsov and Thibault Rieutord and Sara Tucci-Piergiovanni},
  title     = {Accountability and Reconfiguration: Self-Healing Lattice Agreement},
  booktitle = {25th International Conference on Principles of Distributed Systems (OPODIS)},
  series    = {LIPIcs},
  volume    = {217},
  pages     = {25:1--25:23},
  year      = {2021},
  doi       = {10.4230/LIPIcs.OPODIS.2021.25},
}

@inproceedings{haeberlen2007peerreview,
  author    = {Andreas Haeberlen and Petr Kuznetsov and Peter Druschel},
  title     = {{PeerReview}: Practical Accountability for Distributed Systems},
  booktitle = {Proceedings of the 21st ACM SIGOPS Symposium on Operating Systems Principles (SOSP)},
  pages     = {175--188},
  year      = {2007},
  doi       = {10.1145/1294261.1294279},
}

@article{garciamarquez2025kfc,
  author  = {Mario Garc\'ia-M\'arquez and Nuria Rodr\'iguez-Barroso and M. Victoria Luz\'on and Francisco Herrera},
  title   = {Krum Federated Chain ({KFC}): Using Blockchain to Defend Against Adversarial Attacks in Federated Learning},
  journal = {arXiv preprint arXiv:2502.06917},
  year    = {2025},
}

@article{brocco2026composable,
  author  = {Amos Brocco},
  title   = {A Composable {CRDT} Layer for {Byzantine}-Resilient Deterministic Reconstruction},
  journal = {arXiv preprint arXiv:2606.18966},
  year    = {2026},
}

@article{brocco2026decoupling,
  author  = {Amos Brocco},
  title   = {Decoupling Trust in {Byzantine} {CRDT}s: Fine-grained Post-Compromise Handling without Breaking Causality},
  journal = {arXiv preprint arXiv:2606.31759},
  year    = {2026},
}

@article{jacob2021equivocation,
  author  = {Florian Jacob and Saskia Bayreuther and Hannes Hartenstein},
  title   = {On Conflict-Free Replicated Data Types and Equivocation in {Byzantine} Setups},
  journal = {arXiv preprint arXiv:2109.10554},
  year    = {2021},
}

@article{buterin2017casper,
  author  = {Vitalik Buterin and Virgil Griffith},
  title   = {Casper the Friendly Finality Gadget},
  journal = {arXiv preprint arXiv:1710.09437},
  year    = {2017},
}

@article{goldberg1991floating,
  author  = {David Goldberg},
  title   = {What Every Computer Scientist Should Know About Floating-Point Arithmetic},
  journal = {ACM Computing Surveys},
  volume  = {23},
  number  = {1},
  pages   = {5--48},
  year    = {1991},
  doi     = {10.1145/103162.103163},
}

@article{higham1993accuracy,
  author  = {Nicholas J. Higham},
  title   = {The Accuracy of Floating Point Summation},
  journal = {SIAM Journal on Scientific Computing},
  volume  = {14},
  number  = {4},
  pages   = {783--799},
  year    = {1993},
  doi     = {10.1137/0914050},
}

@article{neumaier1974rundungsfehler,
  author  = {Arnold Neumaier},
  title   = {Rundungsfehleranalyse einiger Verfahren zur Summation endlicher Summen},
  journal = {Zeitschrift f\"ur Angewandte Mathematik und Mechanik (ZAMM)},
  volume  = {54},
  number  = {1},
  pages   = {39--51},
  year    = {1974},
  doi     = {10.1002/zamm.19740540106},
}

@misc{cpython2022sum,
  author       = {{Python Software Foundation}},
  title        = {Improve Accuracy of Builtin sum() for Float Inputs (Compensated Summation, {CPython}~3.12)},
  howpublished = {CPython issue 100425, github.com/python/cpython/issues/100425},
  year         = {2022},
}

@inproceedings{baruch2019little,
  author    = {Gilad Baruch and Moran Baruch and Yoav Goldberg},
  title     = {A Little Is Enough: Circumventing Defenses For Distributed Learning},
  booktitle = {Advances in Neural Information Processing Systems 32 (NeurIPS)},
  year      = {2019},
}

@inproceedings{shejwalkar2021manipulating,
  author    = {Virat Shejwalkar and Amir Houmansadr},
  title     = {Manipulating the {Byzantine}: Optimizing Model Poisoning Attacks and Defenses for Federated Learning},
  booktitle = {Proceedings of the Network and Distributed System Security Symposium (NDSS)},
  year      = {2021},
}

@inproceedings{fang2025brace,
  author    = {Minghong Fang and Zhuqing Liu and Xuecen Zhao and Jia Liu},
  title     = {Byzantine-Robust Federated Learning over Ring-All-Reduce Distributed Computing},
  booktitle = {Companion Proceedings of the ACM Web Conference 2025 (WWW Companion)},
  pages     = {961--965},
  year      = {2025},
  doi       = {10.1145/3701716.3715491},
}

\appendix

\section{Falsifier Transcript Summary}\label{app:falsifier}

The battery runs as \texttt{python3 falsifier.py} (CPU only; about four seconds
on commodity hardware). All test-generation seeds are fixed in the script (seed 42 for the battery;
the A3 floating-point instance uses its own fixed seed); the kernel consumes no
entropy beyond the state-derived content-hash tie-break. The Lemma~\ref{lem:nonassoc}
instance is emitted by check X0.3: with $f=1$ over six seeded 200-dimensional
Gaussian vectors, global multi-Krum selects indices $\{1,3,5\}$ while the
fold path (select over the first five, then select over the selection plus the
sixth) yields a different subset \emph{of the same cardinality}, the
fallback regime (second-stage $m=0$) that Lemma~\ref{lem:nonassoc} identifies as
the demanding case for a fold. The A3 ablation instance uses unquantised
Gaussian doubles: with plain loop accumulation, 39 of the 45 pairwise
distance sums differ between arrival-order and sorted-order accumulation, and
the hashed resolve transcripts diverge; the same comparison under CPython's
compensated built-in \texttt{sum}, or over $Q16.16$-quantised doubles (a
common dyadic grid), shows no divergence, as reported in
Section~\ref{sec:ablations}.

\end{document}